\documentclass{article}

\usepackage[preprint]{neurips_2026}

\usepackage[utf8]{inputenc} 
\usepackage[T1]{fontenc}    
\usepackage{hyperref}       
\usepackage{url}            
\usepackage{booktabs}       
\usepackage{amsfonts}       
\usepackage{nicefrac}       
\usepackage{microtype}      
\usepackage{xcolor}         

\usepackage{algorithm}
\usepackage{algorithmic}

\usepackage{amsmath}
\usepackage{amssymb}
\usepackage{mathtools}
\usepackage{amsthm}

\theoremstyle{plain}
\newtheorem{theorem}{Theorem}[section]

\theoremstyle{definition}
\newtheorem{definition}[theorem]{Definition}

\theoremstyle{remark}
\newtheorem{remark}[theorem]{Remark}

\usepackage{url}
\usepackage{amsmath,amssymb,amsbsy}
\usepackage{amsthm}
\numberwithin{equation}{section}
\usepackage{mathdots}
\usepackage{paralist}
\usepackage{xcolor}
\usepackage{hyperref}
\usepackage{color}
\usepackage{tikz}

\usepackage{subcaption}

\usepackage{dsfont}
\usepackage{bm}
\usepackage{graphicx}
\graphicspath{{pictures/}}
\DeclareGraphicsExtensions{.pdf,.png,.jpg,.gif}

\usepackage{stackengine}

\usepackage{makecell}

\usepackage[column=0]{cellspace}
\usepackage{graphicx}

\usepackage[shortlabels]{enumitem}

\usepackage{relsize}
\usepackage{float}
\usepackage{enumitem}

\usepackage{comment}
\usepackage{bm}
\usepackage{fancyhdr}
\usepackage{caption}
\usepackage{subcaption}

\usepackage{enumerate}

\newcommand{\R}{\mathbb{R}}

\newcommand{\diag}{\operatorname{diag}}
\DeclareMathOperator*{\argmin}{\text{arg~min}}

\newcommand{\bA}{\boldsymbol{A}}
\newcommand{\bb}{\boldsymbol{b}}

\newcommand{\bD}{\boldsymbol{D}}
\newcommand{\be}{\boldsymbol{e}}

\newcommand{\f}{\boldsymbol{f}}

\newcommand{\bL}{\boldsymbol{L}}

\newcommand{\bq}{\boldsymbol{q}}

\newcommand{\bw}{\boldsymbol{w}}
\newcommand{\bW}{\boldsymbol{W}}
\newcommand{\bx}{\boldsymbol{x}}
\newcommand{\bX}{\boldsymbol{X}}

\newcommand{\bz}{\boldsymbol{z}}

\newcommand{\bell}{\boldsymbol{\ell}}

\newcommand{\calA}{\mathcal{A}}
\newcommand{\calB}{\mathcal{B}}

\newcommand{\calE}{\mathcal{E}}

\newcommand{\calG}{\mathcal{G}}

\newcommand{\calL}{\mathcal{L}}

\newcommand{\calO}{\mathcal{O}}

\newcommand{\calQ}{\mathcal{Q}}

\newcommand{\calV}{\mathcal{V}}

\newcommand{\sign}{\mathrm{sign}}

\graphicspath{{./figs/}}

\newlength{\imgwidth}
\newboolean{twoColVersion}
\setboolean{twoColVersion}{false}
\newcommand{\twoCol}[2]{\ifthenelse{\boolean{twoColVersion}} {#1} {#2} }

\title{Reliable one-bit quantization of \\ bandlimited graph data via single-shot noise shaping}

\author{%
  Johannes Maly \\
  Department of Mathematics \\ 
  Ludwig-Maximilians-Universität \\ 
  and Munich Center for Machine Learning (MCML) \\
  Munich, Germany \\
  \texttt{maly@math.lmu.de} \\
  \And
  Anna Veselovska \\
  Department of Mathematics and Munich Data Science Institute \\ 
  Technical University of Munich \\ 
  and Munich Center for Machine Learning (MCML) \\
  Munich, Germany\\
  \texttt{hanna.veselovska@tum.de} \\
}

\begin{document}

\maketitle

\begin{abstract}
  Graph data are ubiquitous in natural sciences and machine learning. In this paper, we consider the problem of quantizing graph structured, bandlimited data to few bits per entry while preserving its information under low-pass filtering. We propose an efficient single-shot noise shaping method that achieves state-of-the-art performance and comes with rigorous error bounds. In contrast to existing methods it allows reliable quantization to arbitrary bit-levels including the extreme case of using a single bit per data coefficient.
\end{abstract}

\section{Introduction}

In various machine learning applications, the data that have to be processed exhibit a graph structure such as in social networks, web information analysis, and sensor networks \citep{ortega2018graph}.
To reduce data processing costs, research in graph signal processing (GSP) aims to extend well-developed tools from signal processing to graph data by accounting for underlying connectivity \citep{shuman2013emerging,sandryhaila2014big}. This is possible since graph data often has a favorable spectral structure under application of the Graph Fourier Transform \citep{dong2016learning, shuman2016vertex}. An interesting question in this context is whether one can leverage the spectral structure to minimize the error induced by quantizing the data representation to few bits per scalar.  

The interplay between quantization and GSP has been considered from different angles in the past years. \citet{chamon2017finite} study the effect of finite-precision arithmetics on low-pass graph filters. \citet{nobre2019optimized} propose an optimized quantization scheme for reducing the message transmission between nodes when applying polynomial filtering to graph data in a distributed way. In a similar distributed setting,  
\citet{saad2022quantization} examine the noise induced by dithered quantizers when applying different types of filtering to graph data.

Various works have investigated how graph data can be reconstructed from few quantized samples. \citet{di2018optimal,kim2019toward} empirically examine the algorithmic reconstruction of graph data from noisy samples that have been quantized by a memoryless scalar quantizer. \citet{li2021graph,li2022graph} consider optimal task-based quantization by fixing a bandlimited signal on a graph and optimizing the quantization scheme.
\citet{kim2022quantization} proposes a greedy algorithm to select an optimal sampling set on which the discrepancy between a bandlimited ground-truth and its naively quantized counterpart is minimized.

Our work is closely related to the works of \citet{krahmer2023quantization,krahmer2026quantization} and \citet{reingruber2025efficient} which aim at robust quantization without subsampling. \citet{krahmer2023quantization,krahmer2026quantization} use $\Sigma\Delta$-quantization to quantize bandlimited graph data to few bits per entry. \citet{reingruber2025efficient} use local graph Fourier frames to extract a quantized representation. In contrast to the works of \citet{krahmer2023quantization,krahmer2026quantization} and us, \citet{reingruber2025efficient} use overcomplete frames, which leads to a higher-dimensional quantized representation.

Finally, we remark that graph data may exhibit structural properties beyond low-bandwidth behavior \citep{tanaka2020generalized}. In addition, efficient quantization of graph signals plays an important role in enabling scalable training of \emph{Graph Neural Networks (GNNs)} \citep{courbariaux2015binaryconnect,tailor2021degreequant}.


\subsection{Contribution}

In this paper, we propose a novel single-shot noise shaping method for quantizing graph data. It draws inspiration from a recent result of \citet{maly2023simple} on neural network quantization. A key contribution of the present work is to show that the linear quantization framework of \citet{maly2023simple} can be lifted to graph domains by using incoherence to account for the underlying graph geometry. This lifting leads to the removal of restrictions on the realizable bit-levels of the quantization scheme, which existed in prior works on graph data quantization \citep{krahmer2023quantization,krahmer2026quantization}. \\
Our method translates an $N$-dimensional graph structured signal into an $N$-dimensional quantized representation. In contrast to the method of \citep{krahmer2023quantization} --- which is computationally favorable, but restricted to using $\log\log(N)$  bits per entry --- \emph{our approach allows flexible tuning of the bit-level, including the extreme case of using a single bit per data coefficient}. Furthermore, it comes with notably improved theoretical worst-case guarantees and a strongly simplified proof strategy.

\textbf{Roadmap.} Section~\ref{sec:graphPrelim} introduces notation and basic concepts from graph signal processing.
In Section~\ref{sec:MainResults}, we present the proposed single-shot noise shaping (SSNS) quantization method for bandlimited graph data and state our theoretical guarantees, explaining the relation to the work of \citet{maly2023simple} and comparing our guarantees with existing iterative noise-shaping approaches.
Section~\ref{sec:NumericalExperiments} presents numerical experiments on several graph topologies, validating the theoretical scaling and demonstrating practical performance.

\subsection{Notation}
\label{sec:NotationPreliminaries}

In the following, we will abbreviate index sets as $[N] = \{1,\dots,N\}$. We will use bold lowercase and uppercase letters to distinguish vectors and matrices from scalars. The Euclidean and the max-norm of a vector $\bz \in \R^N$ are denoted by $\| \bz \|_2$ and $\| \bz \|_\infty$, respectively. The $\ell_2$-operator norm of a matrix $\bX \in \R^{n_1\times n_2}$ is denoted by $\| \bX \|$. The $(\ell_2\to\ell_\infty)$-operator norm of $\bX$ is denoted by $\| \bX \|_{2,\infty}$. Note that $\| \bX \|_{2,\infty}$ corresponds to the maximum $\ell_2$-norm of the rows of $\bX$. We denote the kernel of $\bX$ by $\ker(\bX)$. When applied to a matrix $\bX^{n\times n}$, the operator $\diag$ extracts its diagonal $\bx \in \R^n$; when applied to a vector $\bx \in \R^n$, $\diag$ returns the corresponding diagonal matrix $\bX \in \R^{n\times n}$. We abbreviate $a \le Cb$, for an absolute constant, by $a \lesssim b$ and write $a \simeq b$ if $a \lesssim b$ and $b \lesssim a$.

\section{Preliminaries on graph data}\label{sec:graphPrelim}

We consider data that lies on an undirected graph $\calG = (\calV,\calE,\bW)$, where $\calV$  is a set of $N$ vertices, $\calE$ is a set of edges, and $W \in \R^{N\times N}$ is a weighted adjacency matrix. Let $d_m:=\sum_{n=1}^N{W_{mn}}$  be the degree of the $m$th vertex. 
The normalized  Laplacian $\calL$ of $\calG$ is defined as  $\calL= \bD^{-1/2}(\bD-\bW)\bD^{-1/2}$, where $\bD=\diag\{(d_1,d_2,\cdots, d_N)^T\}$ is the degree matrix. In our setting, $\calL$ is symmetric, positive semi-definite and thus has an orthogonal eigenbasis ${\bX=[\bx_1, \bx_2, \cdots, \bx_N]}$, where the eigenvectors $\bx_1,\cdots,\bx_N$ have corresponding non-negative eigenvalues $0\le\lambda_1\le\cdots\le \lambda_N$.

Data on the graph can be represented by a function $f\!:\!\calV\! \to\! \R$ defined on vertices of the graph. Note that any $f$ is equivalent to a vector ${\f \in \R^N}$, where the $n$th component of  $\f$ represents the function value at the $n$-th vertex in $\calV$. The eigenvectors and eigenvalues of the Laplacian $\calL$ provide a spectral interpretation of the graph. Indeed, the eigenvalues $\{\lambda_1, \lambda_2, \cdots, \lambda_N\}$ can be interpreted as graph frequencies, while the eigenvectors demonstrate increasing oscillatory behavior as the magnitude of the graph frequencies increases~\cite{shuman2016vertex}. Based on this interpretation, the \emph{Graph Fourier Transform (GFT)} of $\f$ is defined as 
$\widehat{\f}=\bX^T \f \in \R^N$ with entries $\widehat{\f}(\lambda_i)=\langle \f,\bx_i \rangle$. 

In applications, graph data often exhibit intrinsic structure that can be leveraged for efficient data processing, e.g., dominance of low frequencies in the spectral domain \cite{dong2016learning}. Graph data $\f$ are called \emph{bandlimited} if there exists $r \in [N]$ such that the support of the GFT of $\f$ is contained in the frequencies $\{ \lambda_1,\dots,\lambda_{r} \}$
\cite{pesenson2009variational,shuman2016vertex}. 

Let us denote by $\bX_r=[\bx_1, \cdots, \bx_r] \in \R^{N\times r}$ the matrix $\bX$ restricted to the first $r$ columns. Then every $r$-bandlimited function $\f$ can be written as $\f=\bX_r { 
\bm  \alpha}$, for some ${ 
\bm  \alpha\in \R^r}$. By re-normalizing our data, we can always assume that ${\| \f \|_\infty =  1}$. 
The graph's geometry and properties of its bandlimited function subspaces can be quantified by the \emph{graph incoherence} \cite{shuman2016vertex}.\footnote{\citet{krahmer2023quantization} worked with a squared incoherence, i.e., their incoherence parameter $\mu$ is equal to $\mu(\bX_r)^2$ as defined in \eqref{eq:incoh-def}. Note that our definition aligns with the original incoherence of \citet{shuman2016vertex} and leads to tighter bounds since $\mu(\bX_r) \le \mu(\bX_r)^2 \in [1,N/r]$, cf.\ Section \ref{sec:Comparison} below. \label{footnote:Incoherence}}

\begin{definition} \cite{shuman2016vertex}
For the $r$-dimensional subspace of graph functions spanned by  ${\bX_r=[\bx_1, \cdots, \bx_r] \in \R^{N\times r}}$, let ${P_{\bX_r}=\bX_r\bX_r^T}$ be the orthogonal projection onto $\bX_r$. The \emph{incoherence} $\mu(\bX_r)$ of the graph subspace $\bX_r$ is defined via 
\begin{equation}\label{eq:incoh-def}
    \mu(\bX_r)^2 =\frac{N}{r}\max_{1\le i\le N} \| P_{\bX_r} \be_i \|_2^2.
\end{equation}
\end{definition}

It has been shown for various types of random graphs that $\mu$ is small with high probability if the respective graph is sufficiently large \cite{dekel2011eigenvectors,brooks2013non}. This indicates that 
the graph Laplacian eigenvectors
are well-spread in general.

\section{Quantizing graph data}\label{sec:MainResults}

We are interested in quantizing bandlimited graph data. Given an $r$-bandlimited graph function $\f\in \R^N$, we aim to construct a quantized representation $\bq\in\calA^N$ that uses fewer bits per entry, but preserves the important information encoded in $\f$. Motivated by imaging applications \citep{floyd1976adaptive,fornasier2016consistency, huang2021robust,krahmer2023enhanced, krahmer2025mathematics,lyu2023sigma, ehler2021curve}, the resulting quantization error $\calQ\calE$ can be quantified by the $\ell_2$-distance between $\f$ and $\bq$ under the action of a low-pass graph filter $\bL:=[\bell_1\ldots,\bell_N]\in \R^{N \times N}$:
\begin{align*}
    \calQ\calE_{\bL}(\f,\bq) = \| \bL(\f - \bq) \|_2.
\end{align*}

We will focus here on the brick-wall filter $\bL_r = \bX_r\bX_r^T$, for which $\bL_r\f = \f$, and call $\f_{\bq} = \bL_r\bq$ a quantized representative of $\f$.
For $K\geq 1$, we define \emph{midrise} alphabets having $2K$ elements, as sets of the form
\begin{equation}\label{eq-alphabet-midrise}
    \mathcal{A}=\{\pm (k-1/2)\delta: 1\leq k\leq K, k\in\mathbb{Z}\},
\end{equation}
where $\frac{\delta}{2}>0$ denotes the quantizer resolution on the range $[-K\delta,K\delta]$ of $\calA$. The simplest example is the 1-bit alphabet $\{-1, 1\}$ with (full resolution) range $[-2,2]$ and quantizer resolution $1$. 

The \emph{memoryless scalar quantizer} (MSQ) associated with alphabet $\mathcal{A}$ is given by ${Q}:\mathbb{R}\rightarrow \mathcal{A}$ defined as
\begin{equation}\label{eq-MSQ}
    {Q}(z):=\arg\min_{p\in\mathcal{A}}|z-p|.
\end{equation}
For $\calA =\{-1,1\}$ the MSQ $Q$ is given by the two-valued $\sign$-function
\begin{align*}
    \sign(x) = \begin{cases} 1 & x \ge 0 \\ -1 & x < 0. \end{cases}
\end{align*}
When applied to vectors, the MSQ $Q$ acts entry-wise. We will call $Q$ a $B$-bit quantizer if $|\calA| = 2^B$. For such a quantizer with target range $[-c,c]$, $c > 0$, the quantizer resolution of the midrise alphabet in \eqref{eq-alphabet-midrise} satisfies~$\delta~\simeq~c2^{-B}$.

\subsection{Single-shot noise-shaping (SSNS)}

To quantize band-limited graph data $\f$, we will draw inspiration from recent work on neural network quantization \cite{maly2023simple}. The authors propose to combine a simple neuron preprocessing algorithm with plain application of MSQ to quantize neurons in feedforward networks while preserving the input-output relation. It turns out that a similar strategy can be used for reliable quantization of bandlimited graph data. Compared to the iterative noise-shaping quantizer of \citet{krahmer2023quantization,krahmer2026quantization} this approach can be interpreted as a single-shot noise-shaping quantizer, which is also reflected in the name of our method.

\begin{algorithm}[tb]
   \caption{Vector preprocessing \cite{maly2023simple}}
   \label{alg:Preprocessing}
    \begin{algorithmic}
       \STATE {\bfseries Input:} $\bX \in \R^{r\times N}$ ($r < N$), $\bz_0 \in \R^N$, and $c \ge \|\bz_0\|_\infty$
       \vspace{2mm}
       \STATE Initialize $k=0$ and
       \vspace{-2mm}
       $$J_0 = \{ i \in [N] \colon \text{the $i$-th column of $\bX$ is zero} \}$$
       \STATE Define $\bb \in \R^N$ via $\bb_{J_0^c} = \boldsymbol{0}$ and $b_i = c - (z_0)_i$, for $i\in J_0$, such that $\bb \in \ker(\bX)$ and $|(z_0)_i + b_i| = c$ for $i \in J_0$.
       Replace $\bz_0$ with $\bz_0 + \bb$.

       \REPEAT
       \STATE Compute $\bb \in \ker_{J_k}(\bX)$, $\bb \neq \boldsymbol{0}$
       \STATE Compute $\alpha \in \R$ with $\| \bz_k + \alpha \bb \|_\infty = c$
       \STATE $\bz_{k+1} \leftarrow \bz_k + \alpha \bb \in \R^{N}$
       \STATE $J_{k+1} \leftarrow J_k \cup \{ i \in [N] \colon |(\bz_{k+1})_i| = c \}$
       \STATE $k \leftarrow k+1$
       \UNTIL{$\| |\bz_k| - c\boldsymbol{1} \|_0 \le r $}
       \STATE $k_{\text{final}} = k$
      \vspace{2mm}
		\STATE \textbf{Output:} $\bz_{k_{\text{final}}}$ for which $\bX \bz_{k_{\text{final}}} = \bX\bz_0$, $\| \bz_{k_{\text{final}}} \|_\infty = c$, and $\| |\bz_{k_{\text{final}}}| - c \boldsymbol{1} \|_0 \le r$
    \end{algorithmic}
\end{algorithm}

We recall the preprocessing algorithm of \citet{maly2023simple} tailored to the graph setting in Algorithm \ref{alg:Preprocessing}.  The main goal of this preprocessing step is to obtain an alternative representation of 
$\f$ that preserves its low-frequency information while simultaneously minimizing the quantization error when followed by MSQ. While not explicitly mentioned by \citet{maly2023simple}, Algorithm \ref{alg:Preprocessing} is closely related to the Gram-Schmidt walk of \citet{bansal2018gram}. Similar methods appeared far earlier in articles on ``integer-making'' theorems \citep{beck1981integer} and balanced sampling \citep{deville2004efficient}.

Figure \ref{fig:Preprocessing_Illustration} illustrates in $\R^3$ how Algorithm \ref{alg:Preprocessing} proceeds. Intuitively, in each iteration $k$ it constructs a non-zero kernel element $\bb_k$ of $\bX$ that is zero in all entries in which the current iterate $\bz_k$ already achieves the target $\ell_\infty$-norm $c$, and then adds a scaled version of $\bb_k$ to set at least one entry of $\bz_k$ to $\pm c$ while preserving the overall $\ell_\infty$-norm. Since in each step at least one entry of $\bz_0$ is set to $\pm c$, Algorithm \ref{alg:Preprocessing} terminates after at most $N-r$ steps.
Note that in Algorithm \ref{alg:Preprocessing} we denote the kernel of $\bA \in \R^{n_1\times n_2}$ restricted to $J \subset [n_2]$ by
\begin{align}
\label{eq:KerJdef}
    \ker_J(\bA) := \{ \bb \in \ker(\bA) \colon b_i = 0 \text{ if } i \in J \}
\end{align}
and that $\ker_{J_k}(\bX_k) \neq \{\boldsymbol{0}\}$ as long as $\| |\bz_k| - c\boldsymbol{1} \|_0 > r$. Since $\ker_J(\bA) \subset \ker(\bA)$, all iterates $\bz_k$ preserve the low-frequency information $\bX_r\bz_k = \bX_r\bz_0$ when Algorithm \ref{alg:Preprocessing} is applied to $\bX = \bX_r$.

To compute a quantized representation $\bq$ of $\f$, we first apply Alorithm \ref{alg:Preprocessing} to $\bX = \bX_r^T \in \R^{r\times N}$ with $\bz_0 = \f$, and $c = \| \f \|_\infty = 1$. This yields a new representation $\widehat{\f} \in \R^N$ of $\f$ for which $\bX_r^T \widehat{\f} = \bX_r^T \f$, $\| \widehat{\f} \|_\infty = 1$, and
\begin{align}
\label{eq:RelevantComponents}
    |\{ i \in [N] \colon |\widehat{f}_i| \neq 1 \}| \le r.
\end{align}
After preprocessing, all but at most $r$ entries of $\widehat{\f}$ lie exactly on the boundary of the unit $\ell_\infty$-ball and therefore attain the extreme values $\pm 1$.
This property is crucial: when applying a midrise quantizer with an alphabet including $\pm 1$, at most $r$ unsaturated entries
can contribute nonzero quantization error, while the remaining entries are already elements of the quantization alphabet.

\begin{figure}[h]
    \centering
    \scalebox{0.6}{		
\tikzset{every picture/.style={line width=1.25pt}} 

\begin{tikzpicture}[x=0.75pt,y=0.75pt,yscale=-1,xscale=1]
    
    \draw    (330.33,269.67) -- (330.33,51.67) ;
    \draw [shift={(330.33,49.67)}, rotate = 90] [color={rgb, 255:red, 0; green, 0; blue, 0 }  ][line width=0.75]    (10.93,-3.29) .. controls (6.95,-1.4) and (3.31,-0.3) .. (0,0) .. controls (3.31,0.3) and (6.95,1.4) .. (10.93,3.29)   ;
    \draw    (199,159.67) -- (476.33,159.67) ;
    \draw [shift={(478.33,159.67)}, rotate = 180] [color={rgb, 255:red, 0; green, 0; blue, 0 }  ][line width=0.75]    (10.93,-3.29) .. controls (6.95,-1.4) and (3.31,-0.3) .. (0,0) .. controls (3.31,0.3) and (6.95,1.4) .. (10.93,3.29)   ;
    \draw    (421,69.67) -- (229.09,259.59) ;
    \draw [shift={(227.67,261)}, rotate = 315.3] [color={rgb, 255:red, 0; green, 0; blue, 0 }  ][line width=0.75]    (10.93,-3.29) .. controls (6.95,-1.4) and (3.31,-0.3) .. (0,0) .. controls (3.31,0.3) and (6.95,1.4) .. (10.93,3.29)   ;
    \draw  [draw opacity=0][fill={rgb, 255:red, 65; green, 117; blue, 5 }  ,fill opacity=0.17 ][dash pattern={on 0.84pt off 2.51pt}] (152.54,58) -- (273.2,58) -- (552.99,255.67) -- (432.33,255.67) -- cycle ;
    \draw  [dash pattern={on 1.69pt off 2.76pt}][line width=1.5]  (256.33,123.27) -- (301.93,77.67) -- (415,77.67) -- (415,184.07) -- (369.4,229.67) -- (256.33,229.67) -- cycle ; \draw  [dash pattern={on 1.69pt off 2.76pt}][line width=1.5]  (415,77.67) -- (369.4,123.27) -- (256.33,123.27) ; \draw  [dash pattern={on 1.69pt off 2.76pt}][line width=1.5]  (369.4,123.27) -- (369.4,229.67) ;
    \draw  [fill={rgb, 255:red, 0; green, 0; blue, 0 }  ,fill opacity=1 ] (311,112.17) .. controls (311,110.6) and (312.27,109.33) .. (313.83,109.33) .. controls (315.4,109.33) and (316.67,110.6) .. (316.67,112.17) .. controls (316.67,113.73) and (315.4,115) .. (313.83,115) .. controls (312.27,115) and (311,113.73) .. (311,112.17) -- cycle ;
    \draw  [fill={rgb, 255:red, 0; green, 0; blue, 0 }  ,fill opacity=1 ] (304.33,166.17) .. controls (304.33,164.6) and (305.6,163.33) .. (307.17,163.33) .. controls (308.73,163.33) and (310,164.6) .. (310,166.17) .. controls (310,167.73) and (308.73,169) .. (307.17,169) .. controls (305.6,169) and (304.33,167.73) .. (304.33,166.17) -- cycle ;
    \draw  [fill={rgb, 255:red, 0; green, 0; blue, 0 }  ,fill opacity=1 ] (366.33,210.83) .. controls (366.33,209.27) and (367.6,208) .. (369.17,208) .. controls (370.73,208) and (372,209.27) .. (372,210.83) .. controls (372,212.4) and (370.73,213.67) .. (369.17,213.67) .. controls (367.6,213.67) and (366.33,212.4) .. (366.33,210.83) -- cycle ;
    \draw [color={rgb, 255:red, 0; green, 0; blue, 0 }  ,draw opacity=1 ]   (313.83,112.17) -- (307.43,161.35) ;
    \draw [shift={(307.17,163.33)}, rotate = 277.42] [color={rgb, 255:red, 0; green, 0; blue, 0 }  ,draw opacity=1 ][line width=0.75]    (10.93,-3.29) .. controls (6.95,-1.4) and (3.31,-0.3) .. (0,0) .. controls (3.31,0.3) and (6.95,1.4) .. (10.93,3.29)   ;
    \draw [color={rgb, 255:red, 0; green, 0; blue, 0 }  ,draw opacity=1 ]   (307.17,166.17) -- (367.54,209.66) ;
    \draw [shift={(369.17,210.83)}, rotate = 215.77] [color={rgb, 255:red, 0; green, 0; blue, 0 }  ,draw opacity=1 ][line width=0.75]    (10.93,-3.29) .. controls (6.95,-1.4) and (3.31,-0.3) .. (0,0) .. controls (3.31,0.3) and (6.95,1.4) .. (10.93,3.29)   ;
    \draw [color={rgb, 255:red, 65; green, 117; blue, 5 }  ,draw opacity=1 ][line width=0.75]  [dash pattern={on 4.5pt off 4.5pt}]  (152.54,58) -- (432.33,255.67) ;
    \draw    (126,118.67) .. controls (162.63,148.04) and (191.75,147.02) .. (231.14,117.89) ;
    \draw [shift={(232.33,117)}, rotate = 143.13] [color={rgb, 255:red, 0; green, 0; blue, 0 }  ][line width=0.75]    (10.93,-3.29) .. controls (6.95,-1.4) and (3.31,-0.3) .. (0,0) .. controls (3.31,0.3) and (6.95,1.4) .. (10.93,3.29)   ;
    \draw    (539.67,211) .. controls (537.05,234.52) and (521.63,237.55) .. (493.41,238.92) ;
    \draw [shift={(491.67,239)}, rotate = 357.4] [color={rgb, 255:red, 0; green, 0; blue, 0 }  ][line width=0.75]    (10.93,-3.29) .. controls (6.95,-1.4) and (3.31,-0.3) .. (0,0) .. controls (3.31,0.3) and (6.95,1.4) .. (10.93,3.29)   ;
    
    \draw (259.33,88.07) node [anchor=north west][inner sep=0.75pt]    {\Large $\bz_0 = \f$};
    \draw (280.33,170.4) node [anchor=north west][inner sep=0.75pt]    {\Large $\mathbf{z}_{1}$};
    \draw (395,195.4) node [anchor=north west][inner sep=0.75pt]    {\Large $\mathbf{z}_{2} = \widehat{\f}$};
    \draw (498.67,181.07) node [anchor=north west][inner sep=0.75pt]    {\Large $\mathbf{z}_0 +\mathrm{ker}_{J_0}(\mathbf{X})$};
    \draw (75.67,90.73) node [anchor=north west][inner sep=0.75pt]    {\Large $\mathbf{z}_{1} +\mathrm{ker}_{J_1}(\mathbf{X})$};
    \draw (427.33,88.07) node [anchor=north west][inner sep=0.75pt]    {\Large $cB_{\infty }$};

\end{tikzpicture}

}
    \caption{Illustration of Algorithm \ref{alg:Preprocessing} in $\R^3$. The algorithm walks in a kernel direction (green plane) until it hits the boundary of the scaled $\ell_\infty$ ball $cB_\infty$ such that at least one entry of $\bz_k$ is set to $\pm c$. After reducing the feasible kernel directions (dashed green line), the algorithm repeats the procedure until at most $r$ entries of $\bz_k$ are not equal to $\pm c$.}
    \label{fig:Preprocessing_Illustration}
\end{figure}
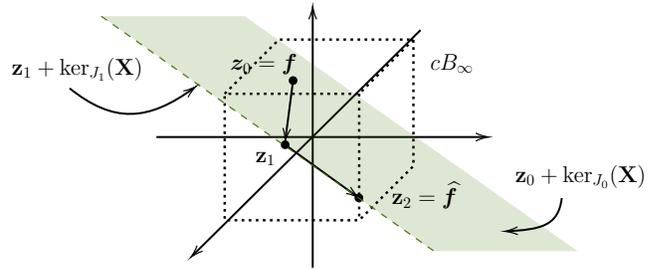

Following the preprocessing step, we apply a $B$-bit quantizer with midrise alphabet $\calA_B = \{ -1, -1 + \frac{2}{2^{B}-1}, \dots, 1 - \frac{2}{2^B-1}, 1\}$, i.e., $K = 2^{B-1}$ and $\delta = 4\cdot 2^{-B}$ to $\widehat{\f}$, which yields the quantized representation $\bq = Q(\widehat{\f})$. The quantization method is summarized in Algorithm \ref{alg:Quantization}.

Due to \eqref{eq:RelevantComponents} and the choice of $\calA$, at most $r$ of the entries of $\widehat{\f}$ will contribute to the quantization error $\calQ\calE_{\bL_r}(\f,\bq)$ in this procedure. We can now bound $\calQ\calE_{\bL_r}(\f,\bq)$.

\begin{theorem}
\label{thm:ErrorBound}
    There exists an absolute constant $C > 0$ such that the following holds. Let $\calG = (\calV,\calE,\bW)$ be an undirected graph. Then, for any $r$-bandlimited $\f \in \R^N$ with $\| \f \|_\infty = 1$, we have that $\bq$ computed via Algorithm \ref{alg:Quantization} satisfies
    \begin{align*}
        \frac{\calQ\calE_{\bL_r}(\f,\bq)}{\| \f \|_2}
        \le C \cdot 2^{-B} \cdot \mu(\bX_r) \cdot \frac{r}{\sqrt{N}},
    \end{align*}
    where $\bL_r = \bX_r\bX_r^T$ is the brick-wall filter of bandwidth $r$, i.e., $\bX_r \in \R^{N\times r}$ contains the eigenvectors of the normalized Laplacian $\calL$ corresponding to the $r$-smallest eigenvalues of $\calL$.\footnote{In the case that the smallest $r$ eigenvalues are not unique, we fix any non-decreasing order and choose the first $r$.}
\end{theorem}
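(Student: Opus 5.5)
We bound the numerator and the denominator separately.

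For the numerator, Algorithm \ref{alg:Preprocessing} (applied with $\bX = \bX_r^T$) gives $\bX_r^T\widehat{\f} = \bX_r^T\f$. Hence $\bL_r\f = \bL_r\widehat{\f}$, and
\begin{align*}
    \calQ\calE_{\bL_r}(\f,\bq) = \| \bX_r\bX_r^T(\widehat{\f} - Q(\widehat{\f})) \|_2 = \| \bX_r^T \bm{e} \|_2, \qquad \bm{e} := \widehat{\f} - Q(\widehat{\f}),
\end{align*}
where the second equality uses that $\bX_r$ has orthonormal columns.

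The alphabet $\calA_B$ contains $\pm 1$, and all but at most $r$ entries of $\widehat{\f}$ equal $\pm 1$ by \eqref{eq:RelevantComponents}. So $\bm{e}$ is supported on a set $S$ with $|S| \le r$. Since $\|\widehat{\f}\|_\infty = 1$, every entry lies in the range of the alphabet, and MSQ gives $|e_i| \le \delta/2 \lesssim 2^{-B}$. Writing $\bX_r^T\bm{e} = \sum_{i\in S} e_i \bX_r^T\be_i$ and applying the triangle inequality,
\begin{align*}
    \| \bX_r^T\bm{e} \|_2 \le \sum_{i \in S} |e_i| \, \| \bX_r^T \be_i \|_2 \le r \cdot \frac{\delta}{2} \cdot \max_i \| P_{\bX_r}\be_i \|_2 = r \cdot \frac{\delta}{2} \cdot \mu(\bX_r)\sqrt{r/N}.
\end{align*}
Here we used $\|\bX_r^T\be_i\|_2 = \|P_{\bX_r}\be_i\|_2$ and the definition \eqref{eq:incoh-def}. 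This yields $\calQ\calE \lesssim 2^{-B}\mu(\bX_r)\, r^{3/2}/\sqrt{N}$.

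The crude triangle-inequality step loses a factor $\sqrt{r}$. The alternative $\|\bX_r^T\bm{e}\|_2 \le \|\bm{e}\|_2 \le \sqrt{r}\,\delta/2$ avoids that loss but does not use incoherence. Neither bound alone gives the claimed rate, so the factor has to be recovered from the denominator.

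For the denominator, we lower bound $\|\f\|_2$ using $\|\f\|_\infty = 1$. Write $\f = \bX_r\bm\alpha$ with $\|\bm\alpha\|_2 = \|\f\|_2$. Choose $i$ with $|f_i| = 1$. By Cauchy--Schwarz,
\begin{align*}
    1 = |\langle \bX_r^T\be_i, \bm\alpha\rangle| \le \|\bX_r^T\be_i\|_2\, \|\f\|_2 \le \mu(\bX_r)\sqrt{r/N}\,\|\f\|_2 ,
\end{align*}
so $\|\f\|_2 \ge \sqrt{N/r}/\mu(\bX_r)$. This introduces a second factor of $\mu$.

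The main obstacle is combining the two bounds so that exactly one power of $\mu$ and the factor $r/\sqrt{N}$ survive. Using the incoherence-free numerator bound $\sqrt{r}\,\delta/2$ together with the denominator bound gives
\begin{align*}
    \frac{\calQ\calE_{\bL_r}(\f,\bq)}{\|\f\|_2} \le \frac{\sqrt{r}\,\delta/2}{\sqrt{N/r}/\mu(\bX_r)} = \frac{\delta}{2}\,\mu(\bX_r)\,\frac{r}{\sqrt{N}} .
\end{align*}
This is exactly the claim with $C$ absorbing $\delta = 4\cdot 2^{-B}$. So the plan is to use the simple bound $\|\bX_r^T\bm{e}\|_2 \le \|\bm{e}\|_2$, controlled by the sparsity of $\bm{e}$, and to spend the single factor of incoherence on lower bounding $\|\f\|_2$.

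Two routine points remain to check. First, the MSQ error bound $|e_i| \le \delta/2$ must hold for all $|\widehat{f}_i| \le 1$; this follows since the alphabet endpoints are $\pm 1$. Second, the degenerate columns handled in the initialization step of Algorithm \ref{alg:Preprocessing} are already set to $\pm 1$, so they contribute no error.
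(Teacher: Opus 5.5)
Your proof is correct and follows the paper's argument. The paper also reduces to $\|\bX_r^T(\f-\bq)\|_2$ and bounds it by $C\,2^{-B}\sqrt{r}$ via Theorem~\ref{thm:MalySaab} with $\Gamma(\bX_r)\le\|\bX_r\|=1$, which is exactly your $r$-sparse-error-plus-$\|\bX_r^T\|\le 1$ argument written out directly instead of cited from \citet{maly2023simple}; it then spends the single factor of incoherence on the same lower bound $\|\f\|_2\ge\mu(\bX_r)^{-1}\sqrt{N/r}$.
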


\begin{algorithm}[h!]
   \caption{$B$-bit single-shot noise shaping (SSNS)}
   \label{alg:Quantization}
    \begin{algorithmic}
       \STATE {\bfseries Input:} Graph Laplacian $\calL \in \R^{N\times N}$, graph data $\f \in \R^N$ normalized to $\| \f \|_\infty = 1$, bit-level $B$, target bandwidth $r \in [N]$
       \vspace{2mm}
       \STATE 
       \begin{itemize}[noitemsep, topsep=2pt]
           \item \textbf{Preparation:} Compute $\bX_r \in \R^{N\times r}$ by eigendecomposition of $\calL$
           \vspace{1mm}
           \item \textbf{Preprocessing:} Apply Algorithm \ref{alg:Preprocessing} with $\bX = \bX_r^T \in \R^{r\times N}$, ${\bz_0 = \f}$, and $c {=} \| \f \|_\infty{=} 1$ to obtain the alternative representation $\widehat{\f}$ of $\f$.
           \vspace{1mm}
           \item \textbf{Quantization:} Apply MSQ $Q$ with $B$-bit alphabet $\calA_B$ ($K = 2^{B-1}$ and $\delta = 4\cdot 2^{-B}$) to obtain $\bq{=}Q(\widehat{\f})$
       \end{itemize}
       \vspace{2mm}

		\STATE \textbf{Output:} $B$-bit approximation $\bq$ of $\f$ 
    \end{algorithmic}
\end{algorithm}

\begin{remark}
    Theorem \ref{thm:ErrorBound} is robust against spectral leakage in the following sense: If $\f$ is only of effective low bandwidth, i.e., $\f = \f_r + \f_{\text{high}}$ where $\f_r$ and $\f_{\text{high}}$ are the projections of $\f$ onto the $r$ first and all remaining frequencies, respectively, and we assume that $\|\f_{\text{high}}\|_2 = \varepsilon \| \f \|_2$, for $\varepsilon \in (0,1)$, then Algorithm \ref{alg:Preprocessing} could be applied to $\f_r$ and the bound in Theorem \ref{thm:ErrorBound} would generalize to
    $$ \frac{\mathcal{QE}_L(\f,\bq)}{\| \f\|_2} \le C \cdot 2^{-B} \cdot \mu(\bX_r) \frac{r}{\sqrt{N}} + \varepsilon. $$
    Note that our empirical results suggest robustness of the method to spectral leakage even when the algorithm is directly applied to $\f$ instead of $\f_r$.
\end{remark}

In the proof of Theorem \ref{thm:ErrorBound}, we will use the following insight from \citet{maly2023simple}.

\begin{theorem}[{\citet[Theorem 2.2]{maly2023simple}}]
\label{thm:MalySaab}
    Let $N_0 > m$ be natural numbers, let $\bw \in \R^{N_0}$, and let $\bX \in \R^{N_0 \times m}$. Define the data complexity parameter
    \begin{align*}
        \Gamma(\bX) = \sup_{\substack{I \subset [N_0] \\ |I| = m}} \| \bX^T|_I \|
    \end{align*}
    where $\bX^T|_I \in \R^{m\times m}$ denotes the submatrix of $\bX^T$ with columns indexed by $I$. For $Q$ as defined in Algorithm \ref{alg:Quantization} and $\bq = Q(\widehat{\bw})$ with $\widehat{\bw}$ being the output of Algorithm \ref{alg:Preprocessing} with inputs $\bX^T$, $\bw$, and $\| \bw \|_\infty$, we have that 
    \begin{align*}
        \| \bX^T\bw - \bX^T\bq \|_2 \le C \cdot 2^{-B} \cdot \Gamma(\bX) \cdot \sqrt{m} \cdot \| \bw \|_\infty.
    \end{align*}
\end{theorem}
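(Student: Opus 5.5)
We would prove the bound in two moves. First we reduce it to the at most $m$ coordinates that Algorithm~\ref{alg:Preprocessing} leaves unsaturated. Then we control the action of $\bX^T$ on those coordinates through $\Gamma(\bX)$. Throughout we set $c=\|\bw\|_\infty$. We read the quantizer $Q$ as the MSQ with alphabet $c\,\calA_B$, i.e.\ $\calA_B$ rescaled to the range $[-c,c]$. This rescaling is needed for homogeneity in $\|\bw\|_\infty$, and it coincides with $\calA_B$ in the normalized case $c=1$ used in Algorithm~\ref{alg:Quantization}. The case $\bw=\boldsymbol{0}$ is trivial, so we assume $c>0$.

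\textbf{Step 1: properties of the preprocessing output.} We verify the three output properties of Algorithm~\ref{alg:Preprocessing} applied to $\bX^T\in\R^{m\times N_0}$.
\begin{itemize}
\item Every update direction lies in $\ker(\bX^T)$. This holds for the initial correction on $J_0$, since those columns are zero, and for every later step by definition of $\ker_{J_k}$. Hence $\bX^T\widehat{\bw}=\bX^T\bw$.
\item Suppose more than $m$ entries of $\bz_k$ are not $\pm c$. The corresponding columns of $\bX^T$ are more than $m$ vectors in $\R^m$, so they are linearly dependent. This yields a nonzero $\bb\in\ker_{J_k}(\bX^T)$, because $J_k$ is contained in the set of saturated indices.
\item Moving along $\bz_k+\alpha\bb$, the map $\alpha\mapsto\|\bz_k+\alpha\bb\|_\infty$ is continuous. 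It equals $c$ at $\alpha=0$ and tends to infinity, so the first $\alpha>0$ at which a free coordinate reaches $\pm c$ keeps all entries in $[-c,c]$. Saturated coordinates are frozen because $b_i=0$ on $J_k$.
\item Each step adds at least one new saturated index, so the algorithm terminates. The output satisfies $\|\widehat{\bw}\|_\infty=c$ and has at most $m$ entries with $|\widehat w_i|\neq c$.
\end{itemize}

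\textbf{Step 2: localize the quantization error.} Let $\be=\widehat{\bw}-\bq$, so that $\bX^T\bw-\bX^T\bq=\bX^T\be$ by Step~1. Since $\pm c\in c\,\calA_B$, every saturated coordinate is reproduced exactly and $e_i=0$ there. Hence $\be$ is supported on a set of size at most $m$. Because $N_0>m$, we can enlarge this support to a set $I$ with $|I|=m$. For the remaining coordinates, $\widehat w_i\in[-c,c]$ lies in the convex hull of the grid. The grid spacing is $2c/(2^B-1)$, which gives
\[
|e_i|\le \frac{c}{2^B-1}\le 2c\,2^{-B}.
\]

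\textbf{Step 3: conclude.} We write $\bX^T\be=\bX^T|_I\,\be_I$. By definition of $\Gamma(\bX)$,
\[
\|\bX^T\be\|_2\le \|\bX^T|_I\|\,\|\be_I\|_2\le \Gamma(\bX)\sqrt{m}\,\|\be_I\|_\infty\le 2\cdot 2^{-B}\,\Gamma(\bX)\sqrt{m}\,\|\bw\|_\infty .
\]
This is the claim with $C=2$.

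The main obstacle is Step~1, specifically the termination and invariance argument. One must check that saturated coordinates never leave $\pm c$ and that a nonzero restricted kernel element always exists while more than $m$ coordinates are free. The dimension count in the second bullet handles existence. The first-hitting-time choice of $\alpha$ in the third bullet handles invariance and feasibility.
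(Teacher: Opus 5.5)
Your argument is correct and is essentially the proof behind this result. The paper does not prove it but imports it from \citet{maly2023simple}, while pointing to exactly your mechanism: preprocessing preserves $\bX^T\bw$ and pushes all but at most $m$ entries to $\pm c$, which the alphabet reproduces exactly, so the error lives on at most $m$ coordinates, each bounded by half the grid spacing, and $\Gamma(\bX)\sqrt{m}$ finishes the estimate. Your explicit rescaling of the alphabet to $c\,\calA_B$ is the right reading of the statement: with the unscaled $\calA_B$ the bound can fail for $\|\bw\|_\infty \neq 1$, because the saturated entries $\pm c$ are then generally not alphabet points, and your reading coincides with Algorithm~\ref{alg:Quantization} in the normalized case $\|\f\|_\infty = 1$ used in Theorem~\ref{thm:ErrorBound}.
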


\begin{proof}[Proof of Theorem \ref{thm:ErrorBound}]
    First note that
    \begin{align*}
        \calQ\calE_{\bL_r}(\f,\bq)
        = \| \bX_r\bX_r^T(\f - \bq) \|_2
        = \| \bX_r^T(\f - \bq) \|_2,
    \end{align*}
    where we used that the columns of $\bX_r$ are orthogonal. Applying Theorem \ref{thm:MalySaab} to the right-hand side with $N_0 = N$, $m=r$, $\bw = \f$ with $\| \f \|_\infty = 1$, and $\bX = \bX_r$ then yields
    \begin{align*}
        \calQ\calE_{\bL_r}(\f,\bq)
        \le C \cdot 2^{-B} \cdot \Gamma(\bX_r) \cdot \sqrt{r}.
    \end{align*}
    Since $\bX_r$ has orthogonal columns, $\Gamma(\bX_r) \le \| \bX_r \| = 1$. Furthermore, since $\f$ is $r$-bandlimited, we obtain
    \begin{align*}
        1
        &= \| \f \|_\infty
        = \| \bX_r \boldsymbol{\alpha} \|_\infty
        \le \| \bX_r \|_{2,\infty} \| \boldsymbol{\alpha} \|_2 \\ 
        &= \mu(\bX_r) \sqrt{\frac{r}{N}} \cdot \| \boldsymbol{\alpha} \|_2, 
    \end{align*}
    for some $\boldsymbol{\alpha} \in \R^r$ with $\| \boldsymbol{\alpha} \|_2 = \| \f \|_2$, where we used that $\mu(\bX_r)^2 = \frac{N}{r} \| \bX_r \|_{2,\infty}^2$ by definition. Since this implies that $\| \f \|_2 \ge \mu(\bX_r)^{-1} \sqrt{\frac{N}{r}}$, we obtain our claimed bound
    \begin{align*}
        \frac{\calQ\calE_{\bL_r}(\f,\bq)}{\| \f \|_2}
        \le \frac{C \cdot 2^{-B} \cdot \sqrt{r}}{\| \f \|_2}
        \le C 2^{-B} \mu(\bX_r) \frac{r}{\sqrt{N}}.
    \end{align*}
\end{proof}

\subsection{Comparison with iterative noise-shaping methods}
\label{sec:Comparison}

Let us compare Theorem \ref{thm:ErrorBound} with existing results of \citet{krahmer2023quantization,krahmer2026quantization} on quantizing bandlimited graph data via iterative noise shaping. Their algorithm iteratively samples with replacement $M$ graph vertices at random, and quantizes the corresponding entry of $\f$ to the alphabet $\calA$ depending on the accumulated quantization error so far. If single vertices have been selected and quantized multiple times, e.g., for $M > N$, these results are accumulated to derive a quantized representation $\bq \in \widetilde \calA^N$ of $\f$. Here $\widetilde \calA$ is an augmented alphabet with $|\widetilde \calA | \lesssim |\calA| \log(N)$ if $M = \calO(N \log(N))$. 

In the extreme case of using a one-bit quantizer with $\calA = \{-1,1\}$ and setting $M = N\log(N)$, Theorem IV.1 in \citet{krahmer2023quantization} shows that
\begin{align}
\label{eq:NoiseQuantizationBound}
    \frac{\calQ\calE_{\bL_r}(\f,\bq)}{\| \f \|_2}
    \lesssim \mu(\bX_r)^2 \frac{r \log(r)}{\sqrt{N \log(N)}}.
\end{align}
It is important to keep in mind that this result requires a total bit budget of $N \log(\log(N))$ bits and that \citet{krahmer2023quantization,krahmer2026quantization} work with the squared incoherence $\mu(\bX_r)^2$, cf.\ Footnote \ref{footnote:Incoherence}. 

If we use the same total bit budget with our method, this corresponds to $B = \log(\log(N))$ since we are quantizing exactly $N$ entries of $\f$. In this case, the bound in Theorem~\ref{thm:ErrorBound} becomes
\begin{align}\label{eq:theor-bound-SSNS}
    \frac{\calQ\calE_{\bL_r}(\f,\bq)}{\| \f \|_2}
    \lesssim \mu(\bX_r) \frac{r}{\sqrt{N}\log(N)}. 
\end{align}

Apparently, the theoretical error bound of our method surpasses the one of \citet{krahmer2023quantization,krahmer2026quantization} in all relevant dependencies ($\mu(\bX_r)$, $r$, and $N$). This is particularly remarkable as our approach is based on single-shot noise shaping before quantization while the work of \citet{krahmer2023quantization,krahmer2026quantization} adaptively shapes the noise during quantization. What is more, the comparison shows that the method of \citet{krahmer2023quantization,krahmer2026quantization} does not allow a final quantization level of only one bit per entry as soon as $\calA$ is augmented in the final representation.

Our experiments in Section \ref{sec:NumericalExperiments} suggest that the worst-case bound in Theorem \ref{thm:ErrorBound} might still be too conservative in general since the empirical performance of the investigated methods considerably surpasses the predicted bounds on several graphs.

\begin{figure*}[tb] 
\centering

\subcaptionbox{Bunny graph\label{fig:bunny-a}}{%
  \includegraphics[width=1.3in]{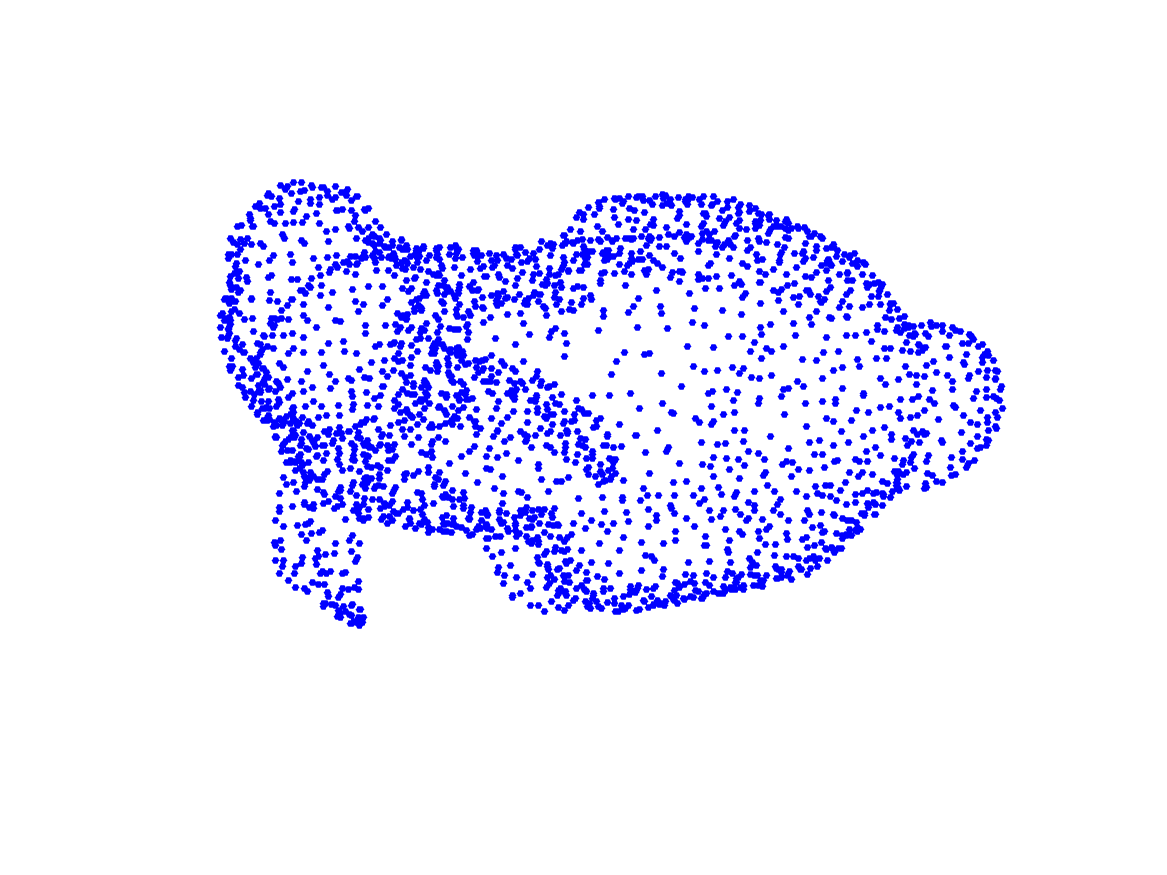}}
\hfill
\subcaptionbox{Relative error\label{fig:bunny-b}}{%
  \includegraphics[width=1.3in]{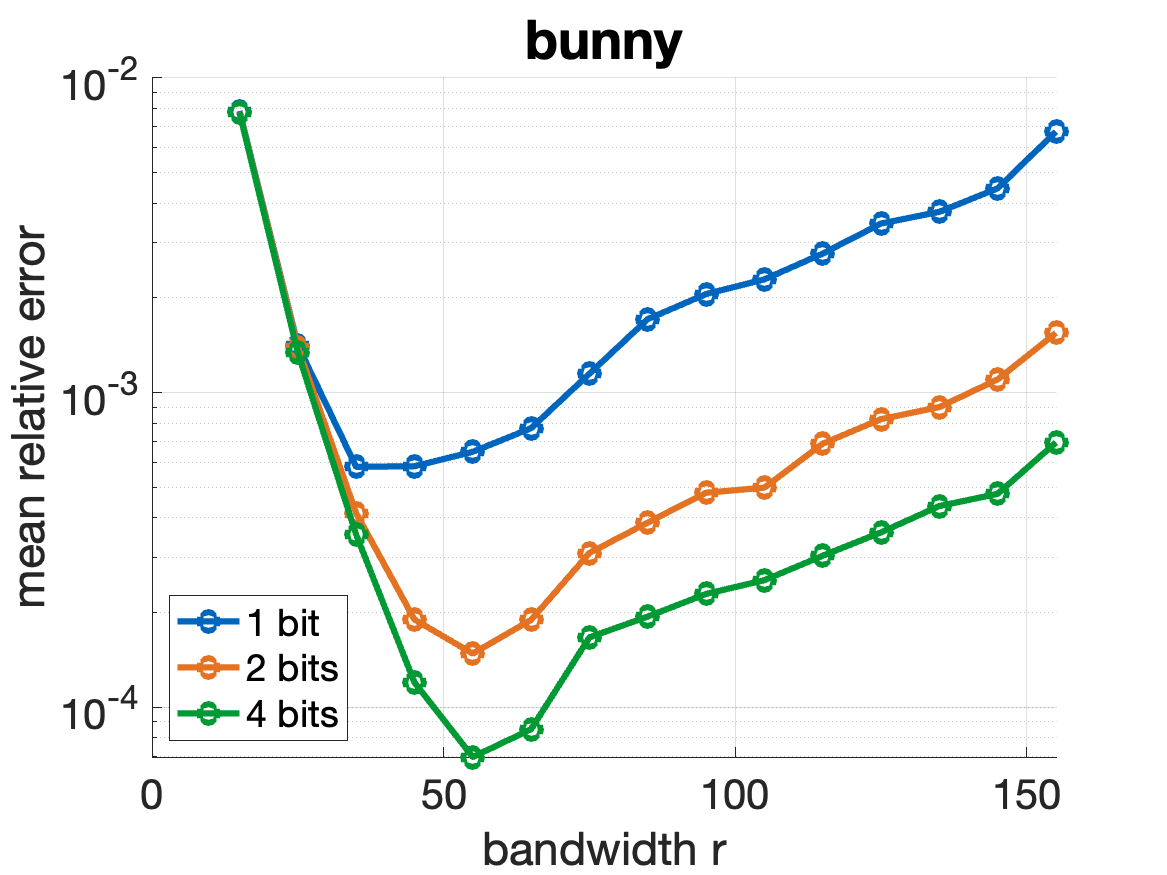}}
\hfill
\subcaptionbox{Swiss roll graph\label{fig:swiss-a}}{%
  \includegraphics[width=1.3in]{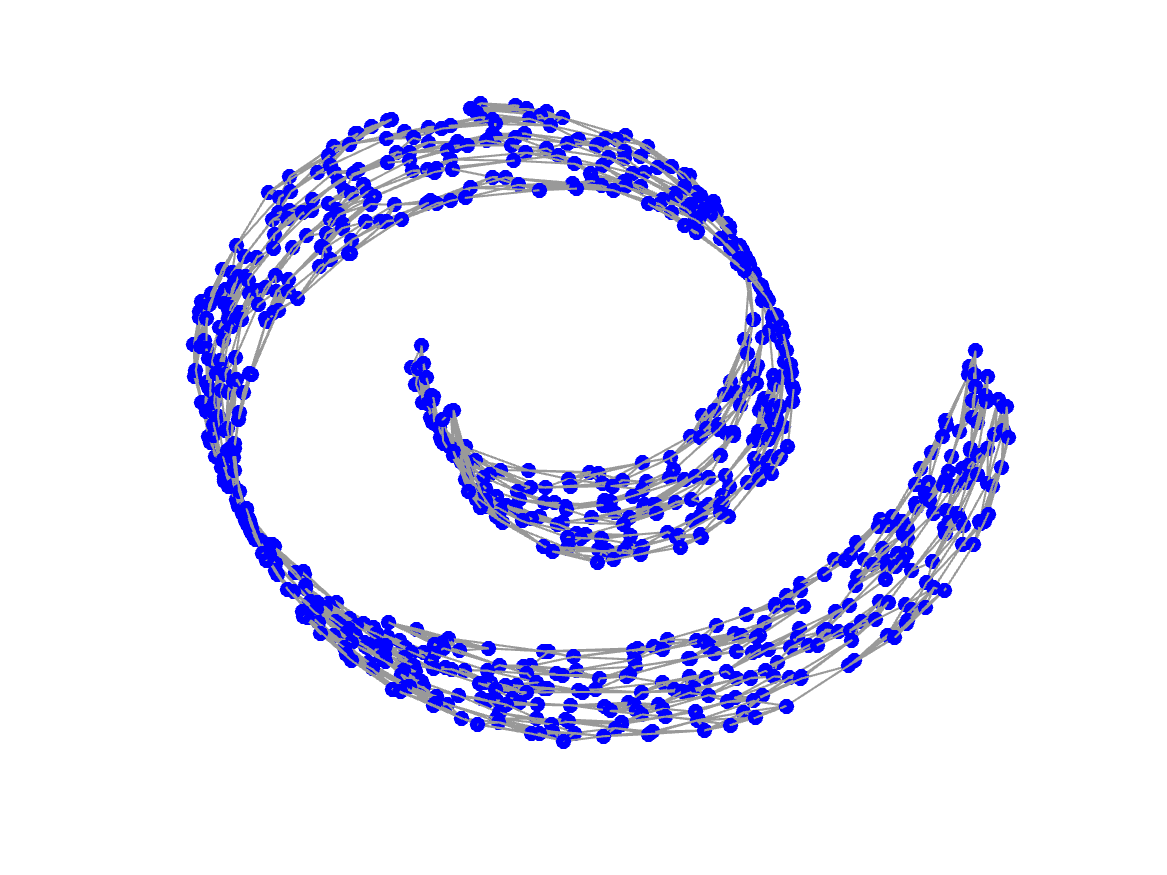}}
\hfill
\subcaptionbox{Relative error\label{fig:swiss-b}}{%
  \includegraphics[width=1.3in]{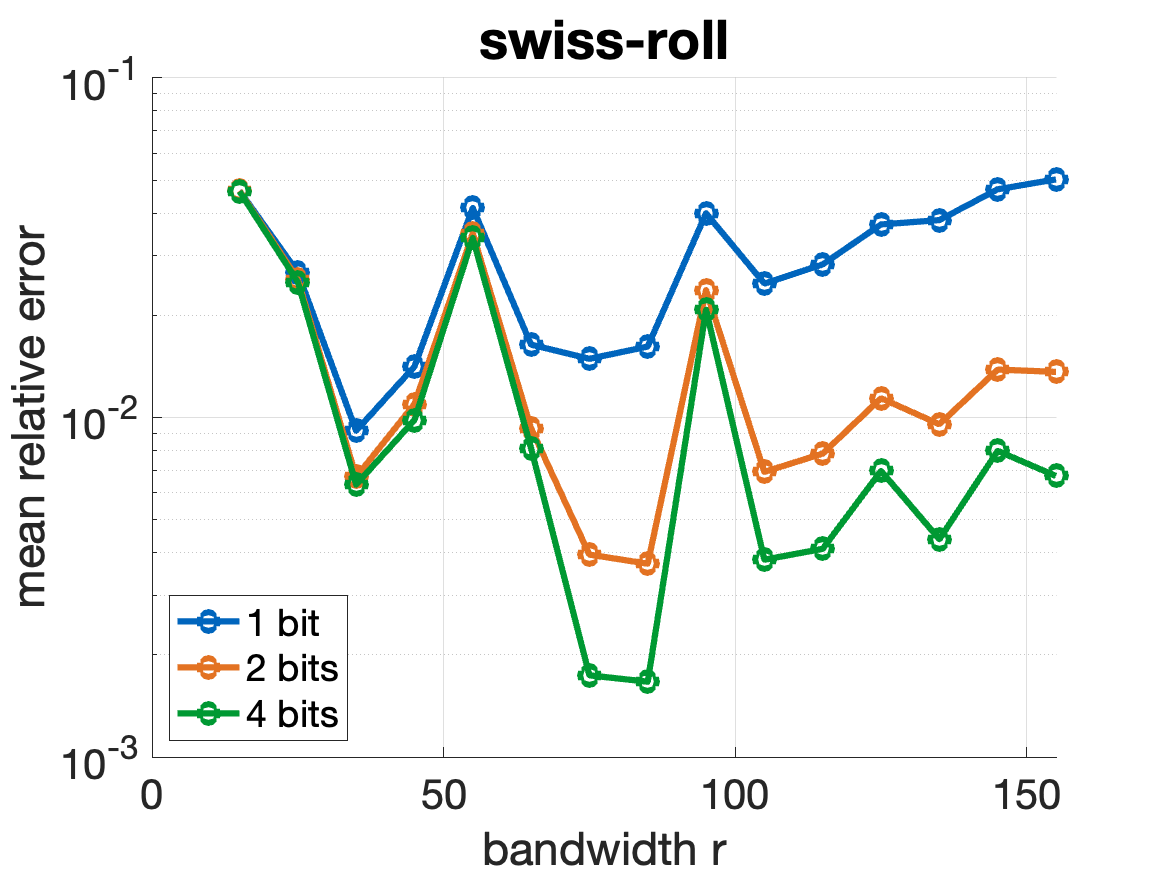}}

\vspace{2mm}

\subcaptionbox{2D grid graph ($30\times30$)\label{fig:grid-a}}{%
  \includegraphics[width=1.3in]{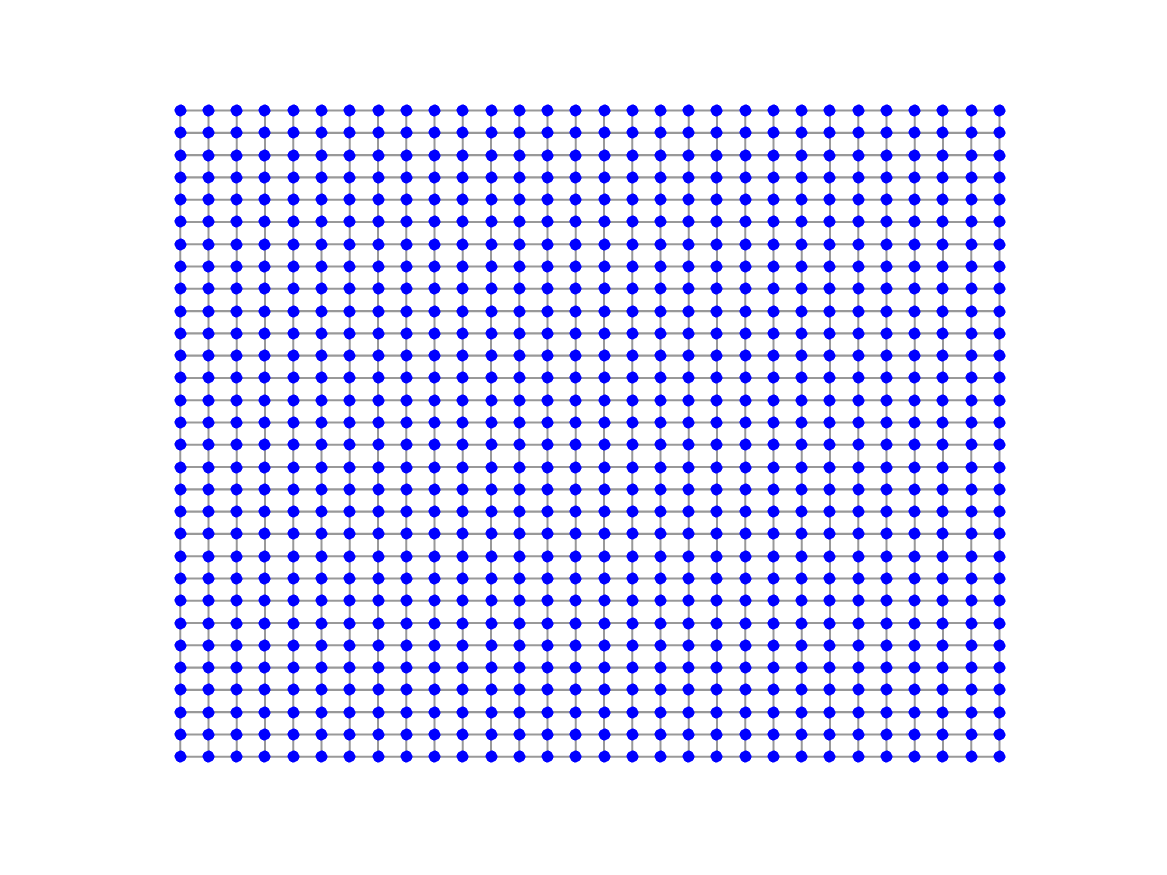}}
\hfill
\subcaptionbox{Relative error\label{fig:grid-b}}{%
  \includegraphics[width=1.3in]{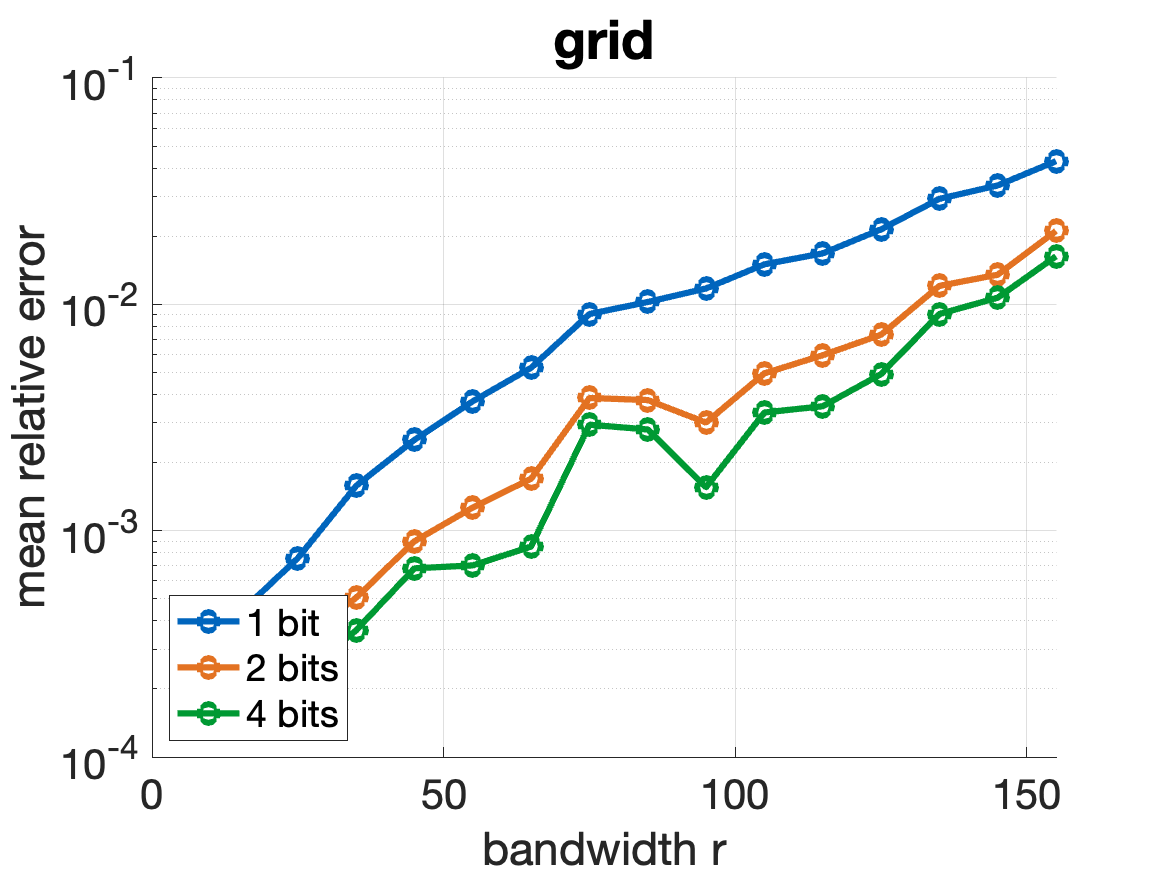}}
\hfill
\subcaptionbox{Ring graph\label{fig:ring-a}}{%
  \includegraphics[width=1.3in]{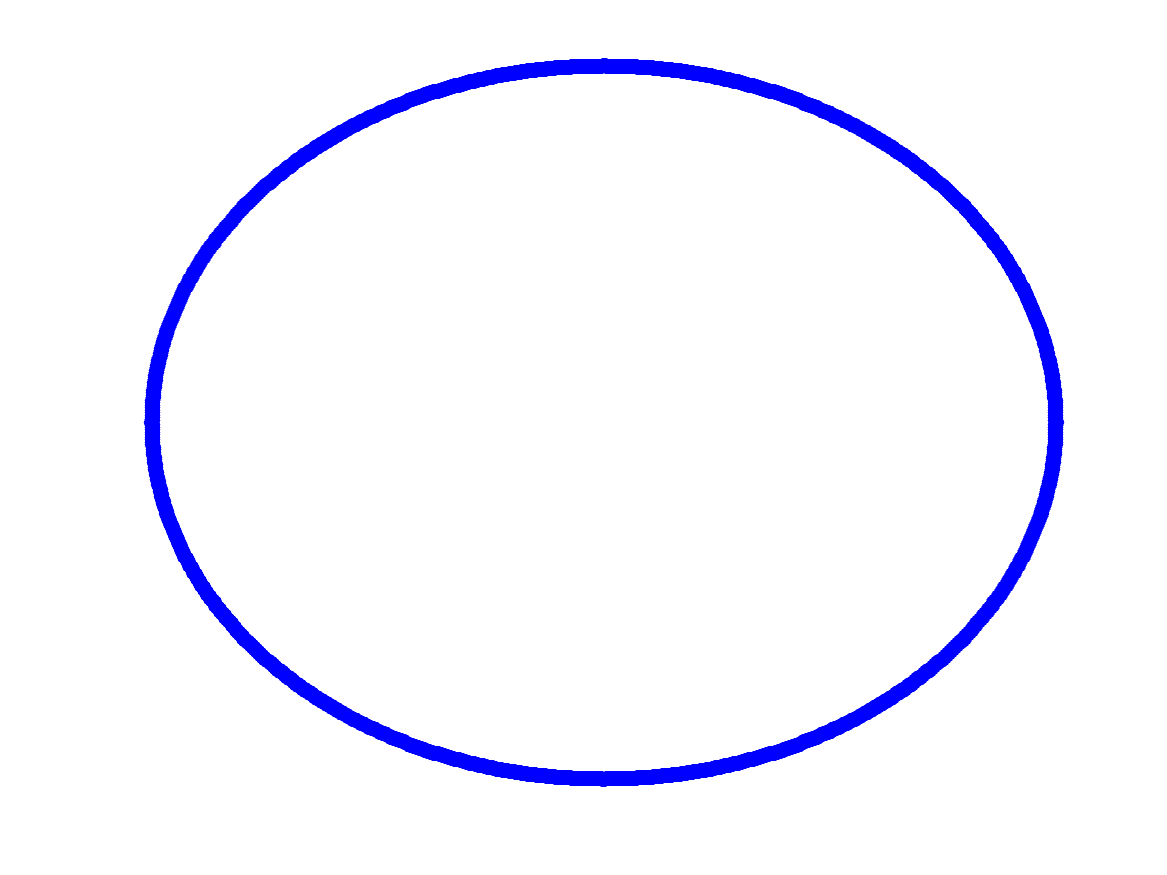}}
\hfill
\subcaptionbox{Relative error\label{fig:ring-b}}{%
  \includegraphics[width=1.3in]{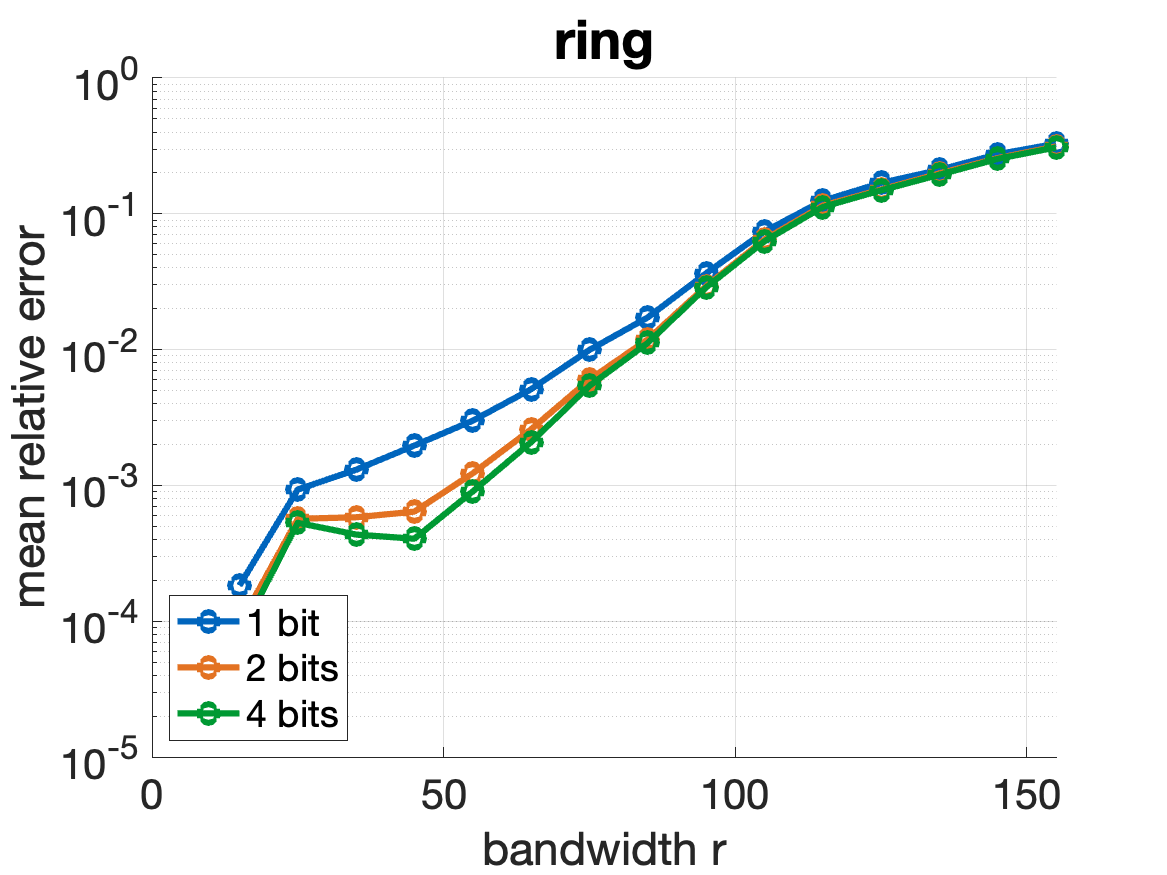}}

\vspace{2mm}

\subcaptionbox{Sensor graph\label{fig:sensor-a}}{%
  \includegraphics[width=1.3in]{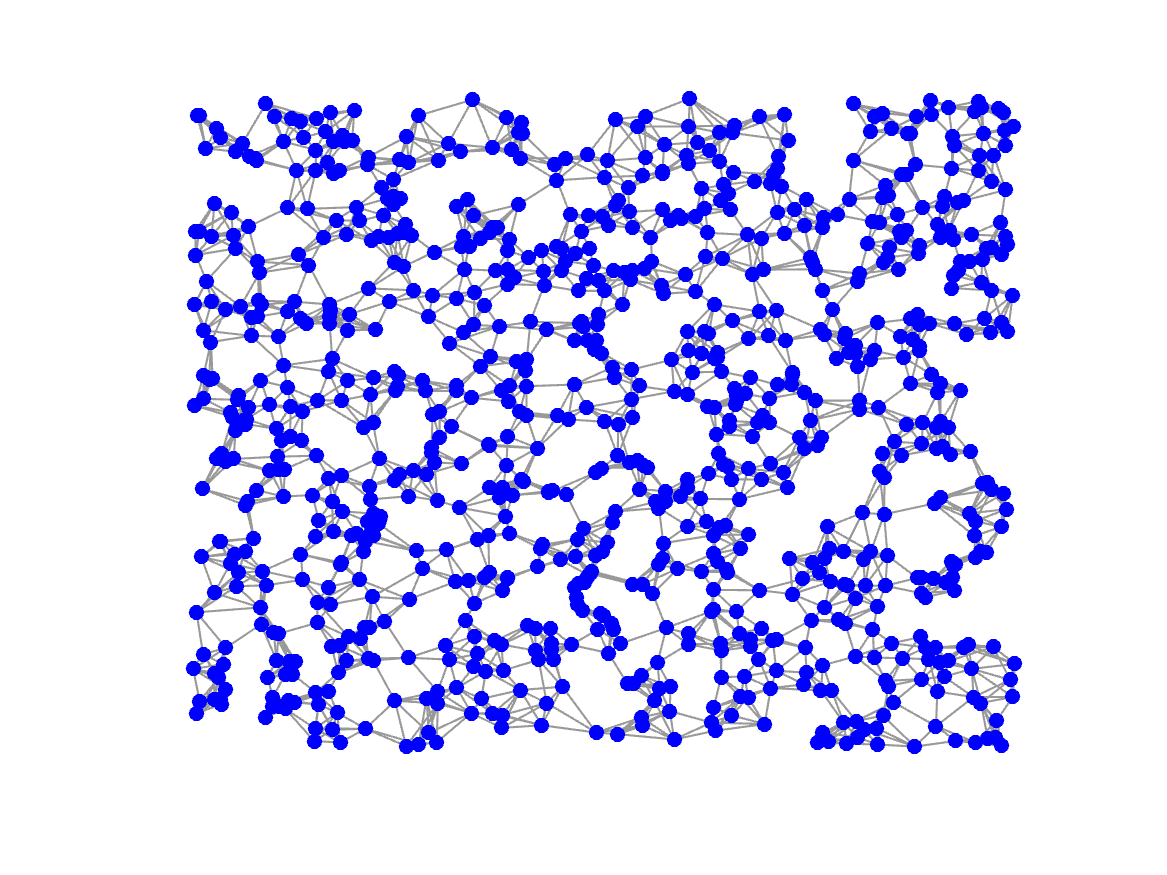}}
\hfill
\subcaptionbox{Relative error\label{fig:sensor-b}}{%
  \includegraphics[width=1.3in]{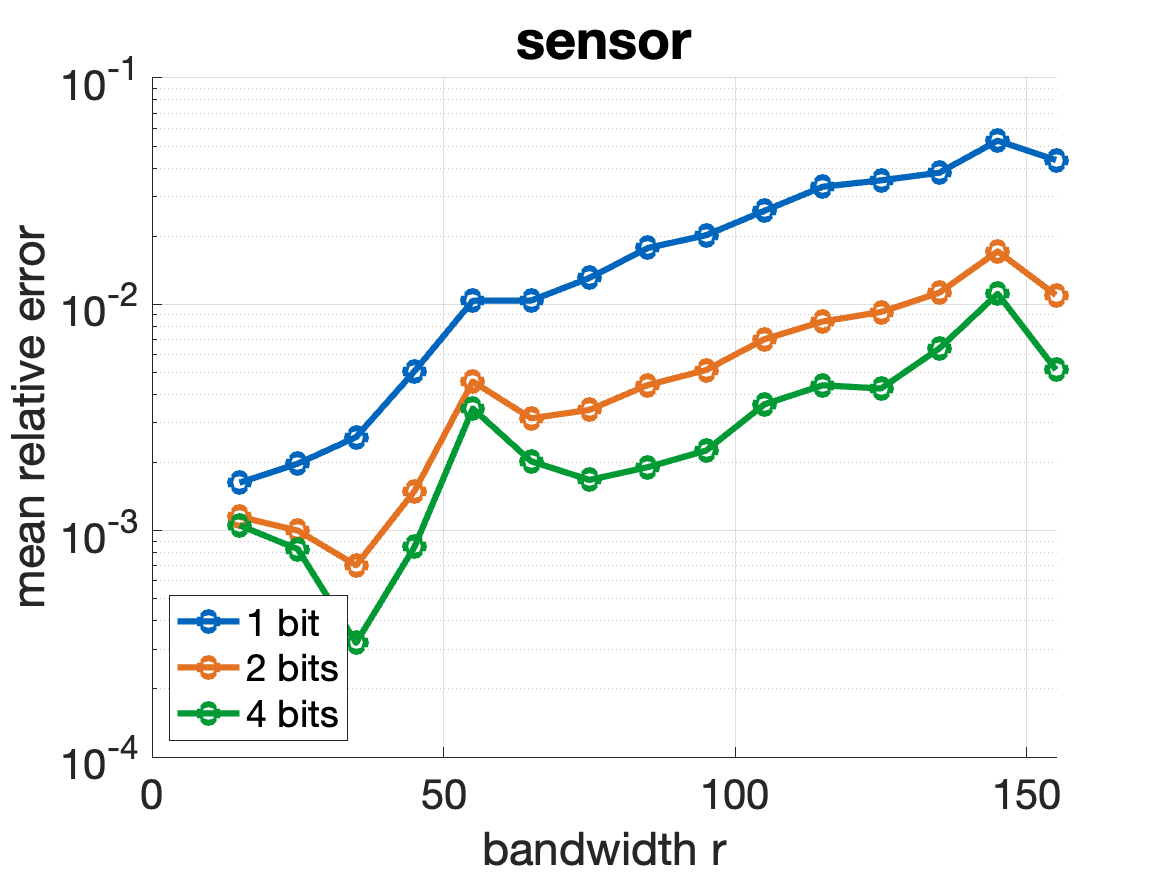}}
\hfill
\subcaptionbox{Minnesota graph\label{fig:minnesota-a}}{%
  \includegraphics[width=1.3in]{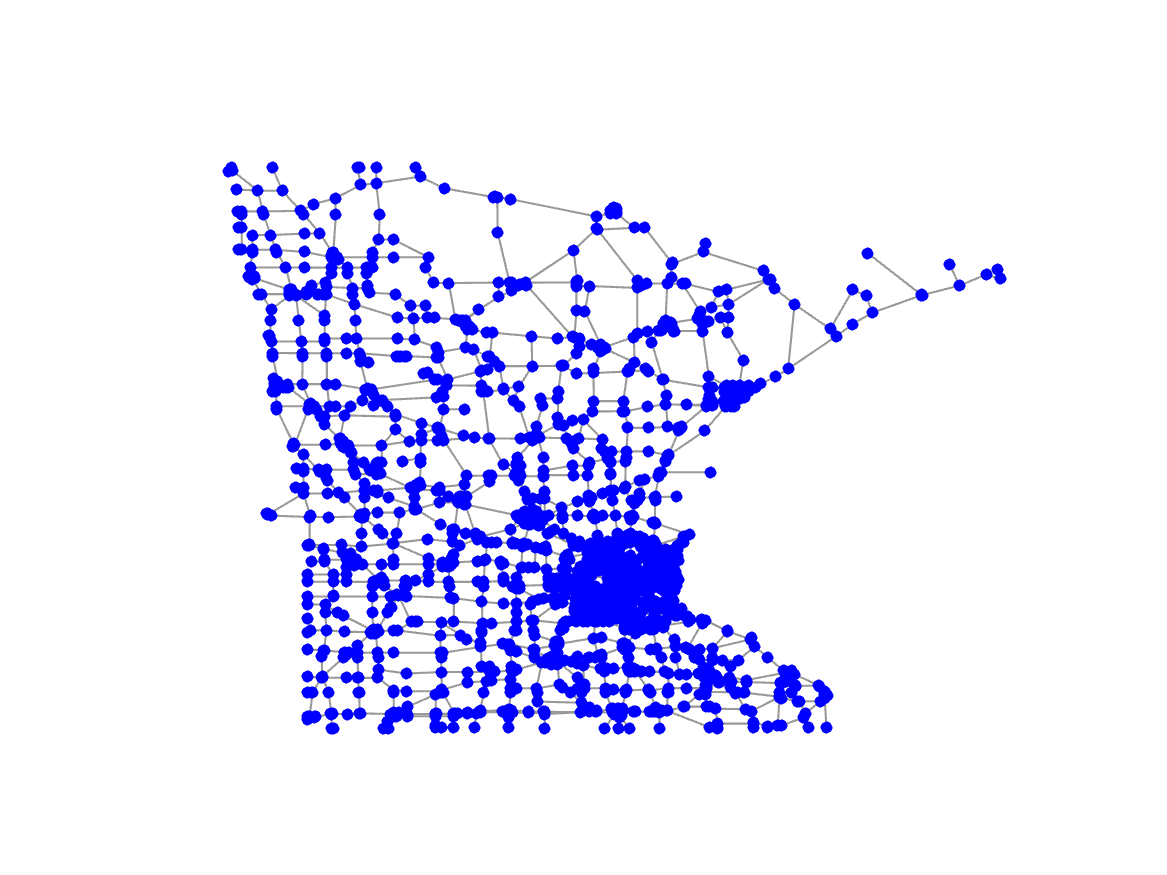}}
\hfill
\subcaptionbox{Relative error\label{fig:minnesota-b}}{%
  \includegraphics[width=1.3in]{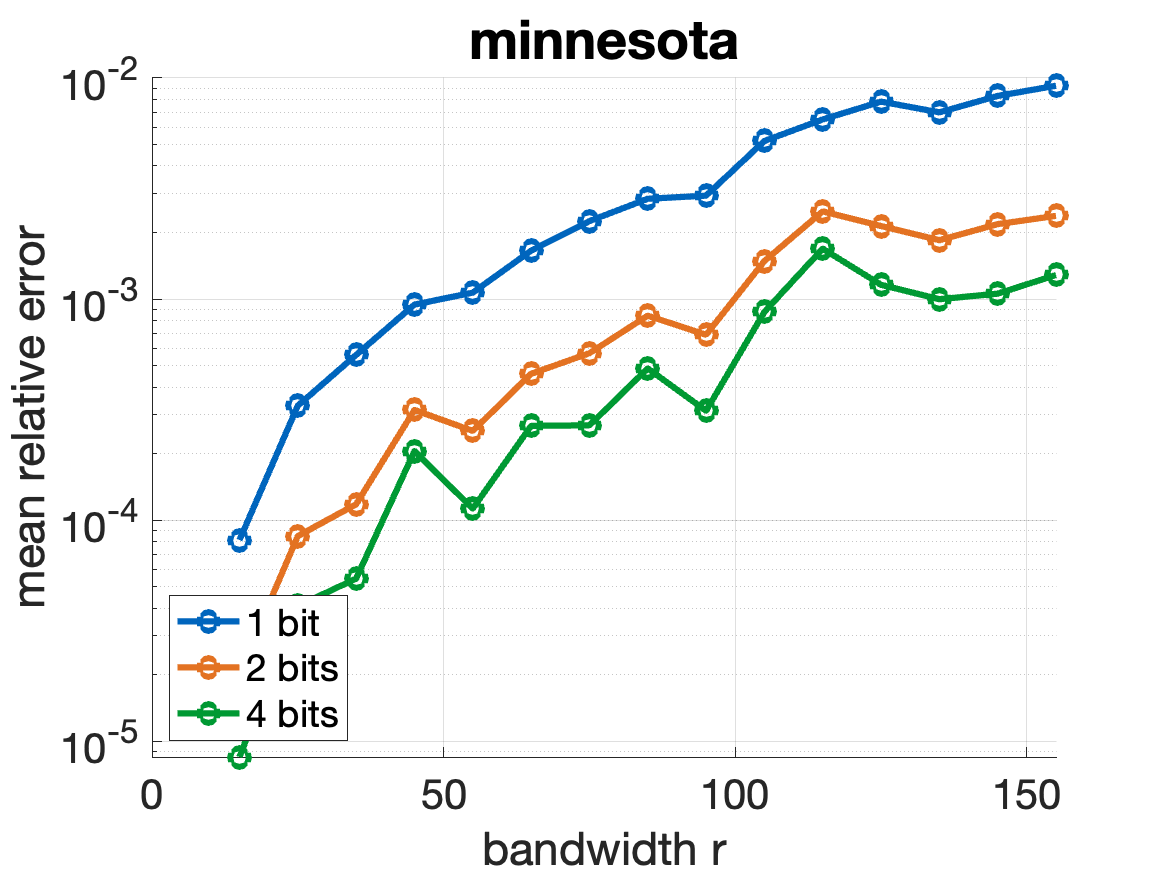}}

\caption{Performance of the proposed SSNS quantization algorithm on different graph structures. 
For each graph (left), the corresponding semilog plot of the relative reconstruction error (right) is shown for different quantization bit budgets $B = 1,2,4$. }
\label{fig:preML_semilog}
\end{figure*}

\section{Numerical Experiments}
\label{sec:NumericalExperiments}

\begin{figure*}[t] 
\centering
\begin{minipage}{0.23\textwidth}
    \centering
    \includegraphics[width=1.15\linewidth]{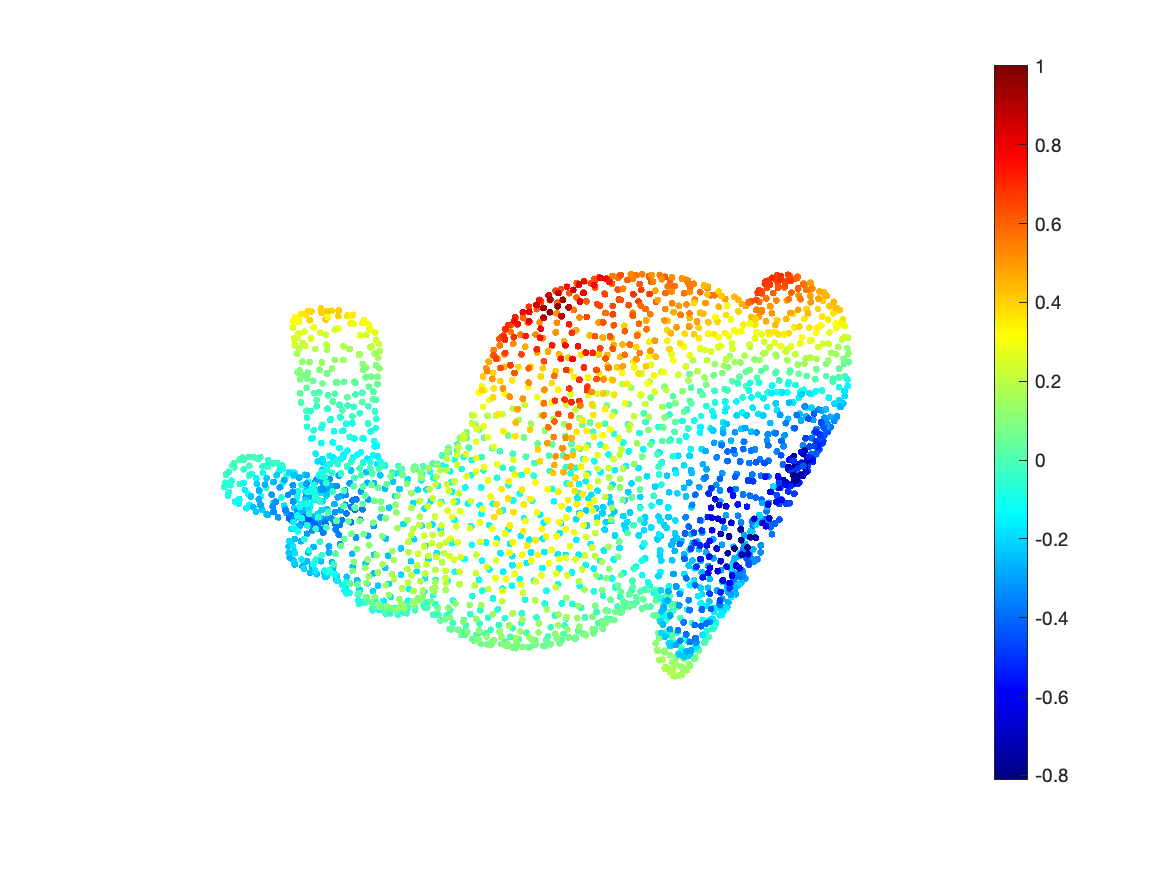}
    \\(a) True data $\f$
\end{minipage}\hspace{5pt}%
\begin{minipage}{0.23\textwidth}
    \centering
    \includegraphics[width=1.15\linewidth]{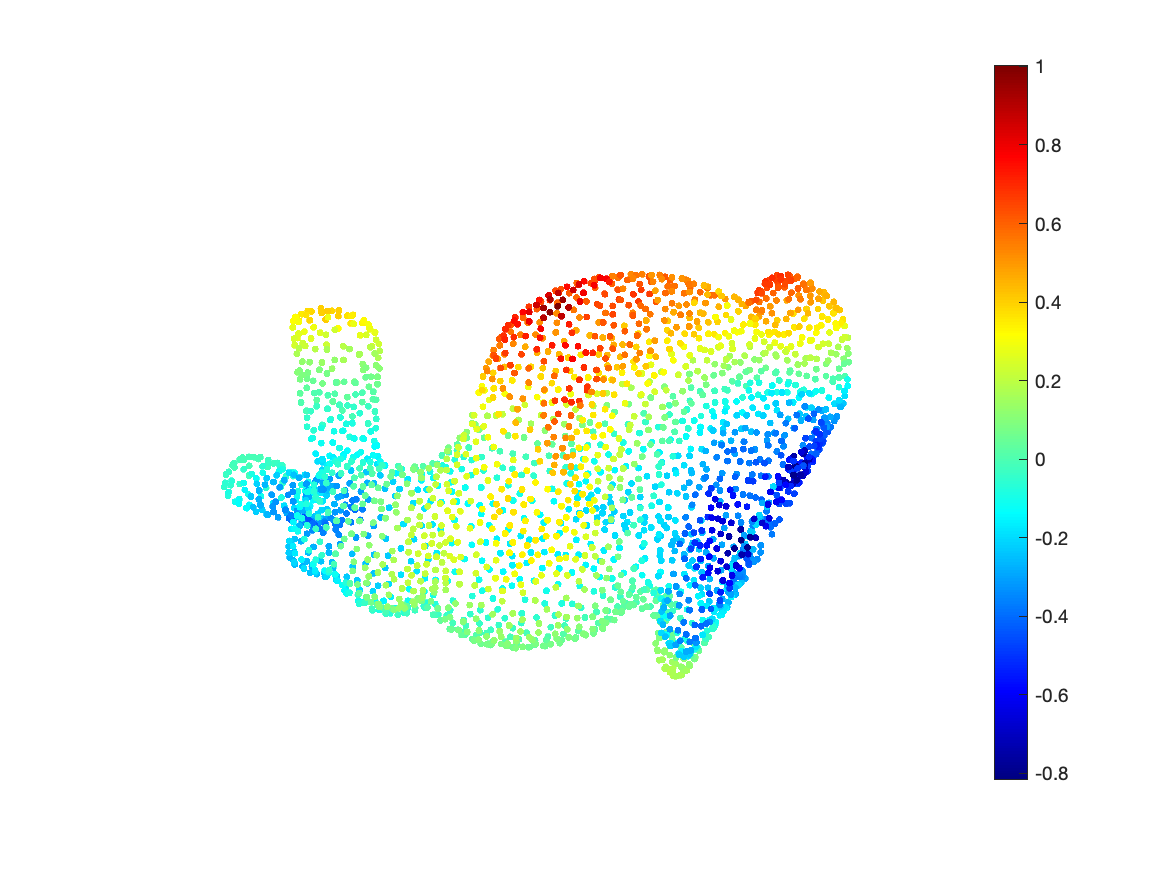}
    \\(b)  $\f_q$ for $B{=}1$ 
\end{minipage}\hspace{5pt}%
\begin{minipage}{0.23\textwidth}
    \centering
    \includegraphics[width=1.15\linewidth]{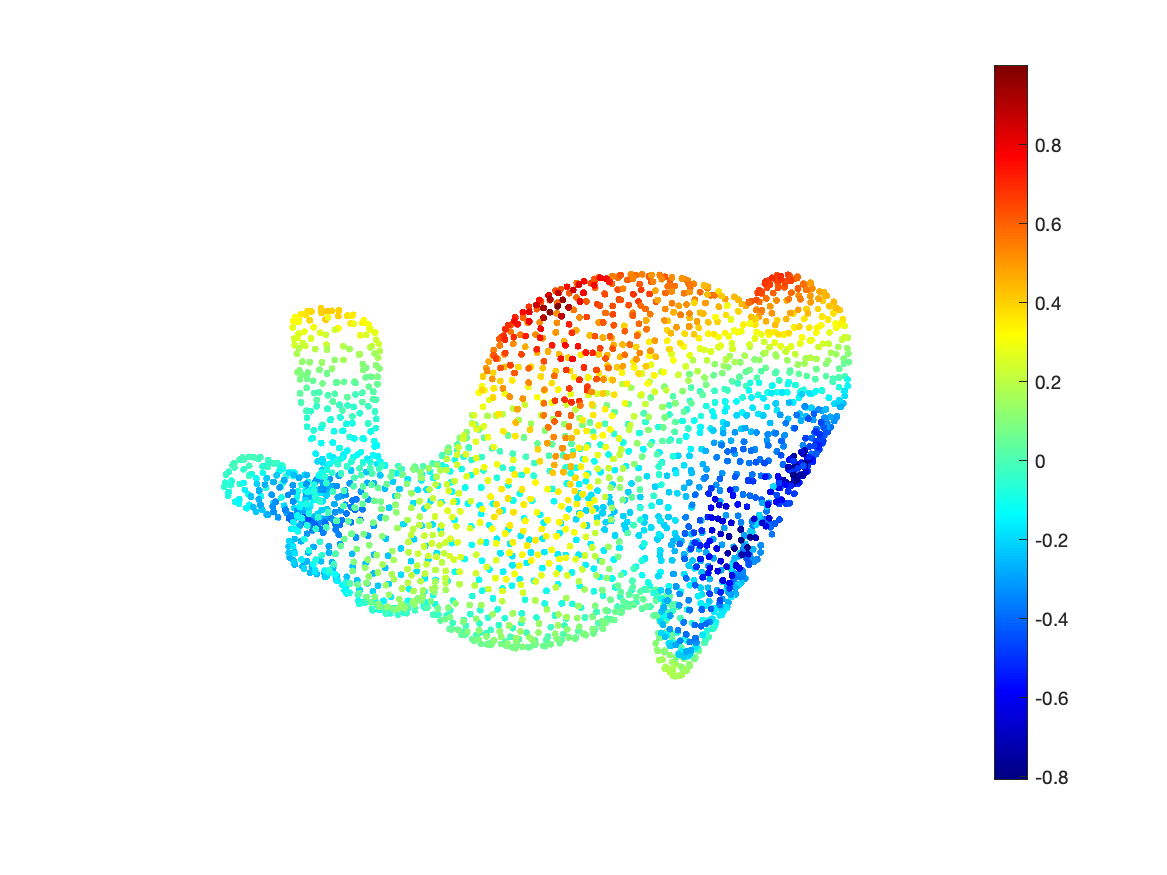}
    \\(c) $\f_q$ for $B{=}2$
\end{minipage}\hspace{5pt}%
\begin{minipage}{0.23\textwidth}
    \centering
    \includegraphics[width=1.15\linewidth]{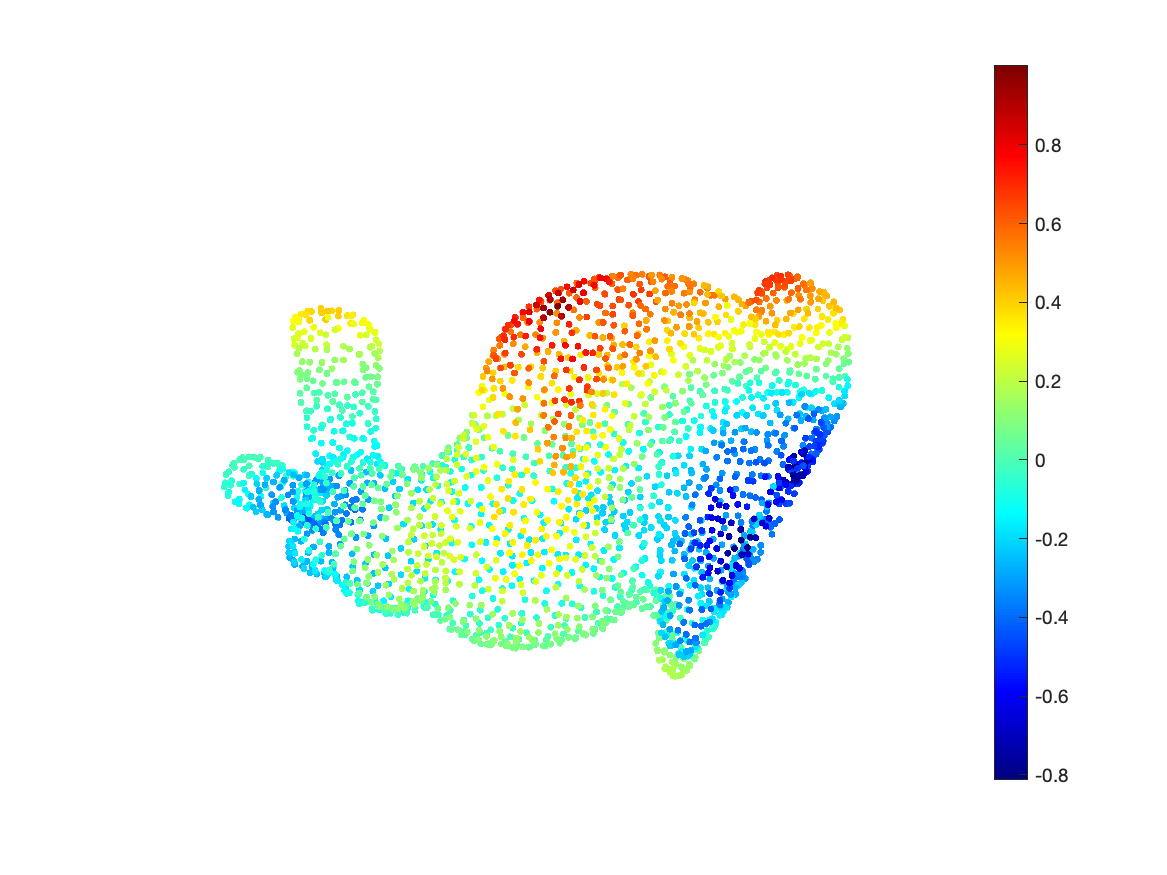}
    \\(d) $\f_q$ for $B{=}4$
\end{minipage}

\vspace{2mm} 

\begin{minipage}{0.23\textwidth}
    \centering
    \includegraphics[width=1.15\linewidth]{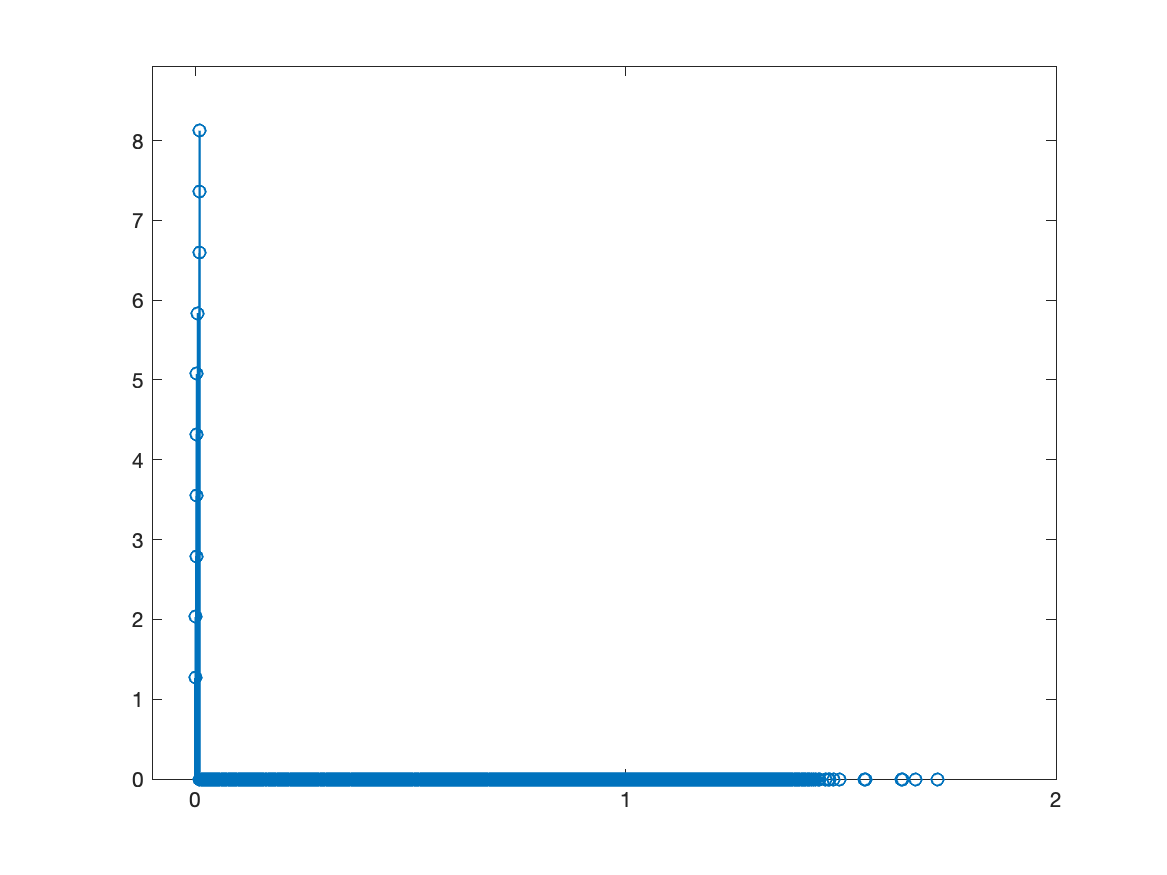}
    \\(e) Data Frequency
\end{minipage}\hspace{5pt}%
\begin{minipage}{0.23\textwidth}
    \centering
    \includegraphics[width=1.15\linewidth]{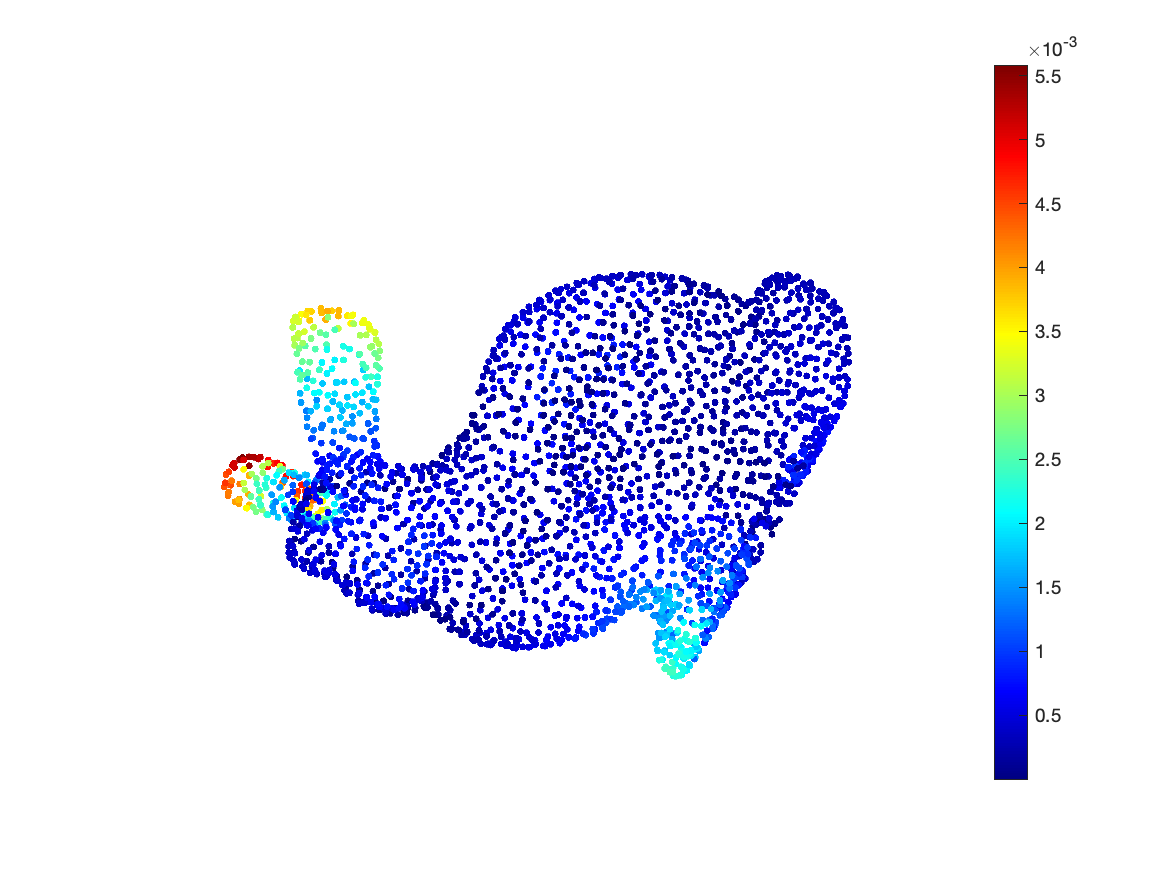}
    \\(f) $\calQ\calE_{\bL}(\f,\bq)=6.4\cdot10^{-5}$ for $B{=}1$
\end{minipage}\hspace{5pt}%
\begin{minipage}{0.23\textwidth}
    \centering
    \includegraphics[width=1.15\linewidth]{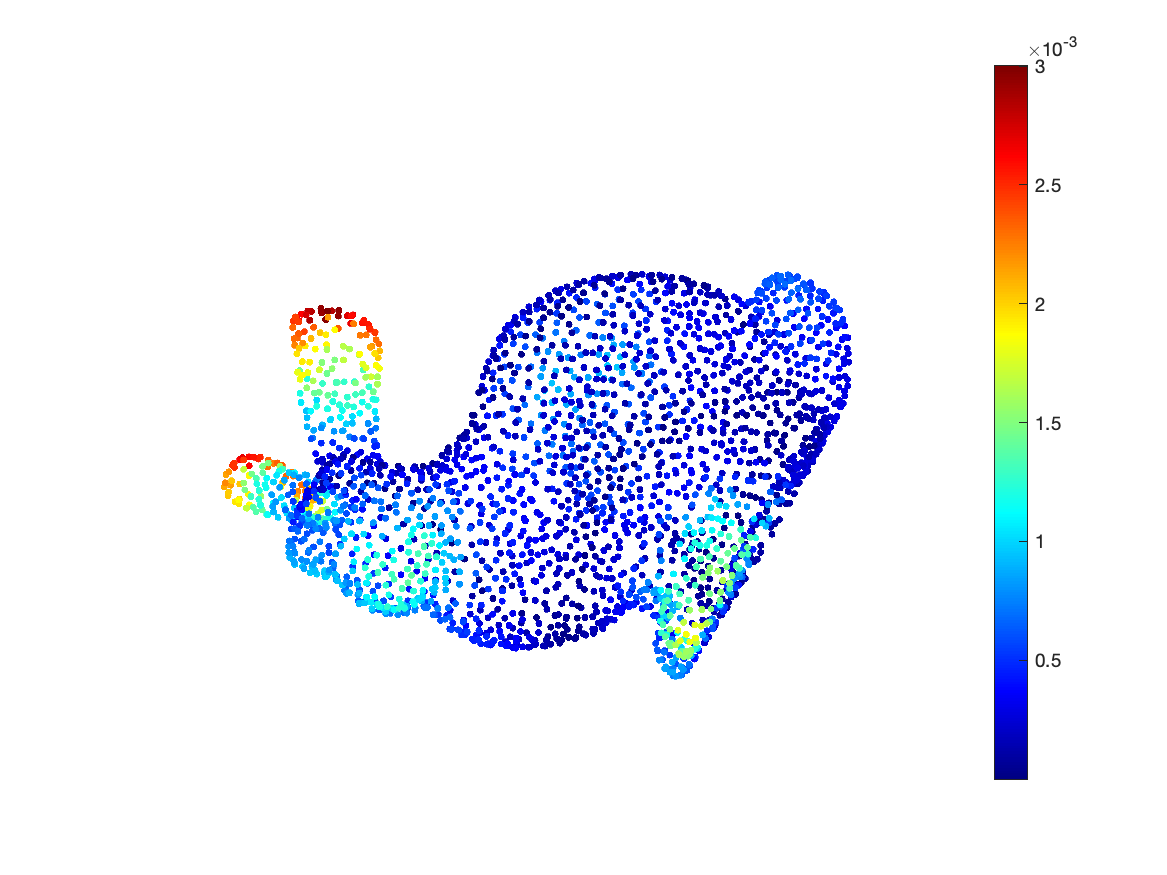}
    \\(g) $\calQ\calE_{\bL}(\f,\bq)=4.0\cdot10^{-6}$ for $B{=}2$
\end{minipage}\hspace{5pt}%
\begin{minipage}{0.23\textwidth}
    \centering
    \includegraphics[width=1.15\linewidth]{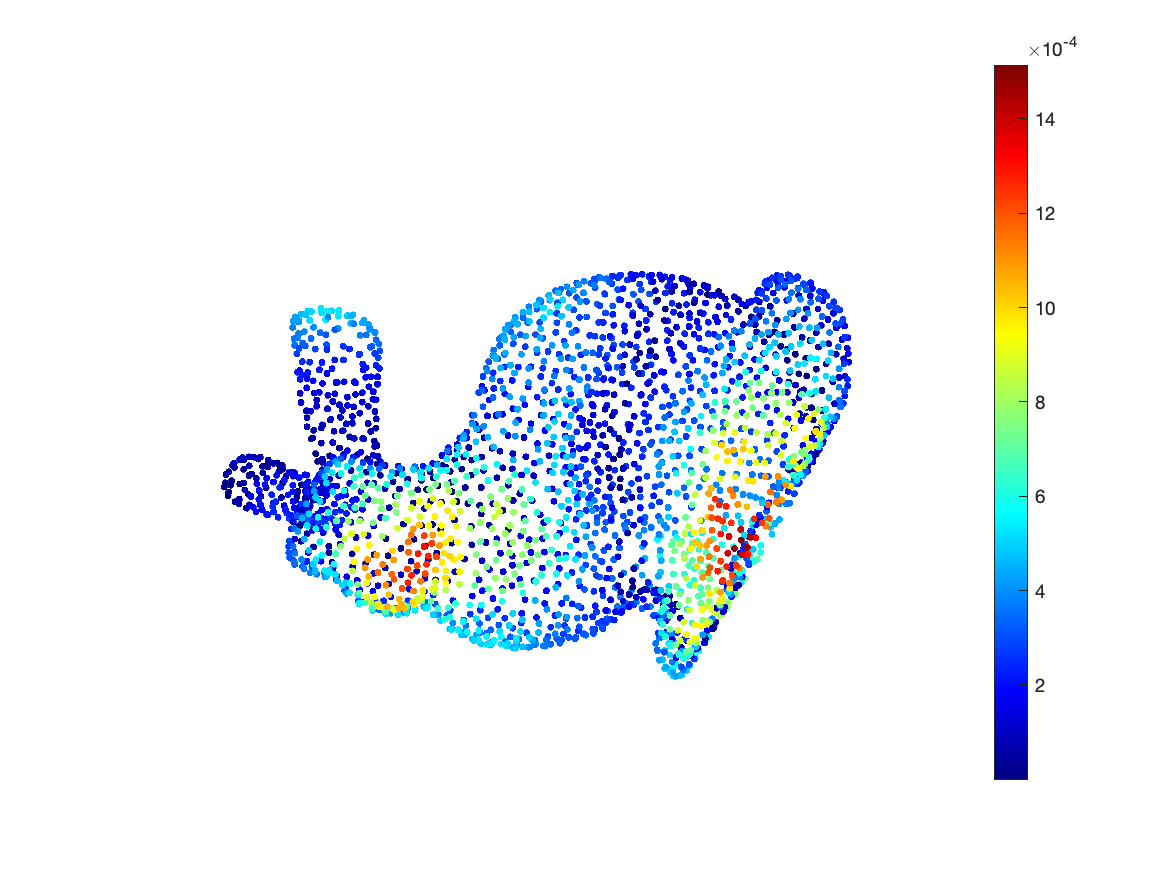}
    \\(h) $\calQ\calE_{\bL}(\f,\bq)=1.2\cdot10^{-6}$ for $B{=}4$
\end{minipage}

\caption{Comparison of the Bunny data and this quantized version: (a) original, (b–d) quantized versions, (e) data frequency spectrum, and (f–h) quantization errors. Each column corresponds to a specific method.}
\label{fig:Bunny}
\end{figure*}








We present numerical experiments to evaluate the performance of the proposed Single-Shot Noise Shaping (SSNS)
method and to compare it with existing quantization schemes. The experiments examine the effect of graph
topology, bandwidth, and bit budget on reconstruction accuracy, verify our theory,
and demonstrate its practical behavior on a 3D shape halftoning task.

While Algorithm \ref{alg:Preprocessing} is well suited to convey the concept of our noise shaping approach, it requires $\calO(r^3N)$ operations in the worst case. \citet{meyer2024thesis} developed a more efficient implementation of Algorithm \ref{alg:Preprocessing} reducing this to $\calO(r^2N)$ operations. In our experiments, we will exclusively use the superior version of \citet{meyer2024thesis}, which is summarized in Algorithm \ref{alg:PreprocessingII} in Appendix \ref{app:Algorithm}, and refer to it as \emph{Single Shot Noise Shaping (SSNS)}.

Our experiments were conducted on a MacBook Pro equipped with an Apple M1 processor and 16 GB RAM, using MATLAB R2023a. Graph signal processing operations, including graph Fourier transform computations and graph visualizations, were implemented using the GSPBOX toolbox \cite{perraudin2014gspbox}. The corresponding code is available in the supplementary materials. 


\begin{figure}
    \centering
    \includegraphics[width=\linewidth]{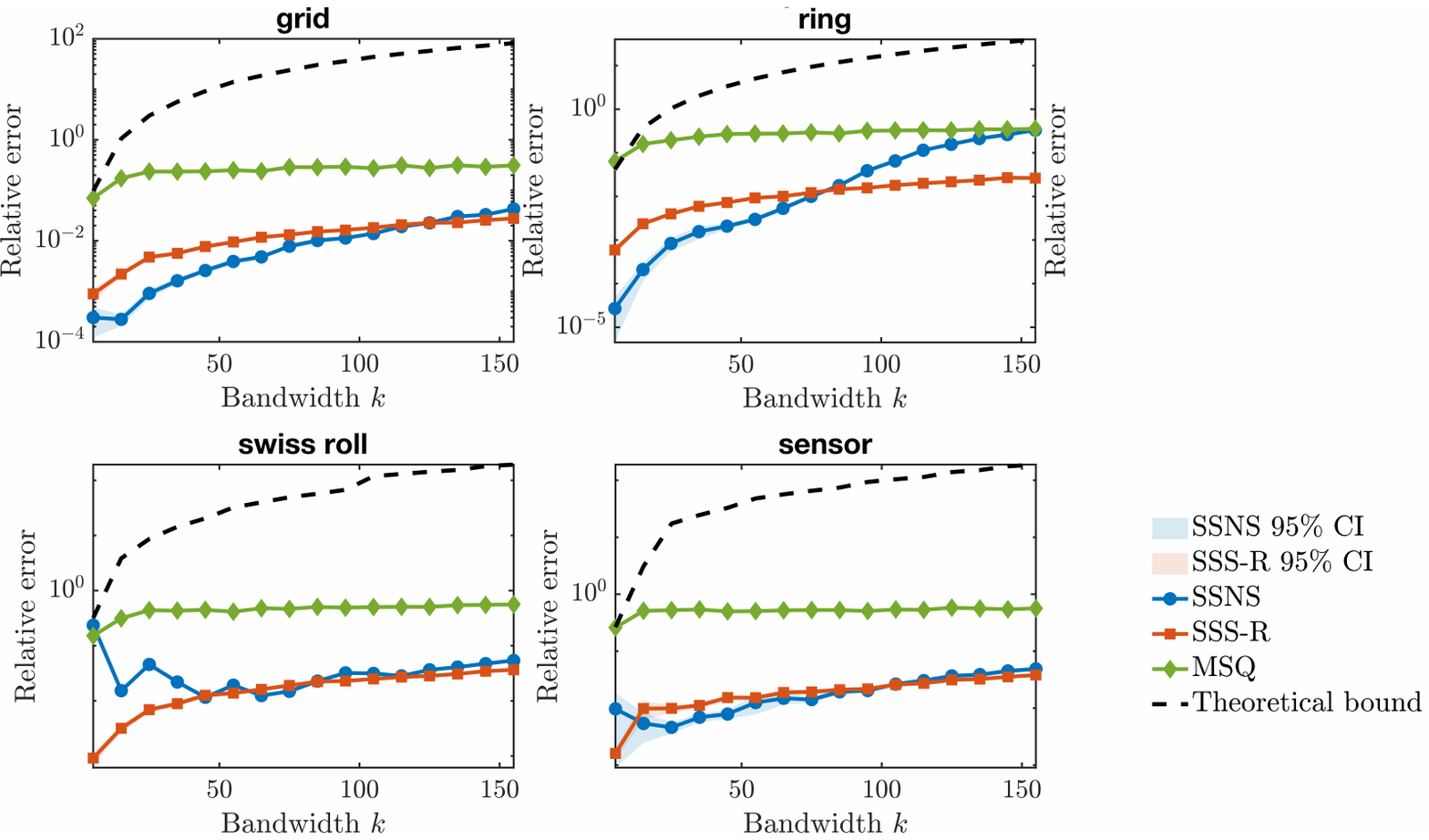}

    \caption{Comparison of the performance of SSNS to the theoretical worst-case guarantees \ref{thm:ErrorBound}, \emph{Memory-less quantization (MSQ)}, and to the \emph{Step-by-Step-Serving with Replacement (SSS-R)} method.}
    \label{fig:theory-SSNS-SSS-R}
\end{figure}

\textbf{Experiment 1: Quantization Error vs. Bandwidth Across Graph Topologies.} We begin by evaluating the performance of SSNS on several standard graph topologies, including \emph{grid}, \emph{bunny}, \emph{ring}, \emph{Swiss-roll}, \emph{sensor} and \emph{Minnesota} graphs. For each graph, we consider low-pass graph data $\f = \bX_r\boldsymbol{\alpha}$ of varying bandwidths $r \in [15,155]$, generated with Gaussian $\boldsymbol{\alpha} \in \R^r$, which is normalized to have $\| \f \|_\infty = 1$. 

We perform experiments with multiple bit budgets, $B \in \{1, 2, 4\}$, using $20$ independent realizations of the data for each bandwidth. SSNS is applied to each realization, and the quantization error is measured by the \emph{Relative Error} in $\ell_2$-norm after applying a brick-wall filter or bandwidth $r$, i.e.,
\begin{equation}\label{eq:num:rel:error}
\text{Relative Error} = \frac{\calQ\calE_{\bL_r}(\f,\bq)}{\| \f \|_2} = \frac{\| \bL_r(\f-\bq) \|_2}{\| \f \|_2}
\end{equation}

Figure~\ref{fig:preML_semilog} shows the Relative Error of SSNS as a function of the bandwidth for each graph, on a semi-logarithmic scale and averaged over the random realizations. Each curve corresponds to a different bit budget ($B = 1,2,4$). We observe the following:

\begin{itemize}[noitemsep, topsep=0.5pt]
    \item As the number of bits increases, the average relative error decreases consistently across all graphs, confirming the expected trade-off between bit budget and quantization error. We further investigate this in Experiment~3 in Appendix \ref{app:Experiments}.
    \item Overall, lower bandwidth induces lower quantization errors, due to low-bandwidth data being concentrated in fewer Fourier modes. However, we note that on certain graphs such as bunny or sensor this trend is reversed for very small bandwidths. So far, we could not find a satisfactory explanation of this peculiar behavior; it is also not reflected in our worst-case error bound, cf.\ Experiment 3. However, we suspect that this is related to the structure of the first eigenvectors, which are more dominant for low-bandwidth signals.
\end{itemize}

Figure \ref{fig:Bunny} compares the quantized representations $\f_{\bq} = \bL_r \bq$ of an exemplary data sample on the bunny graph with their unquantized counterpart. The figure visualizes how well low-bandwith data on graphs can be preserved by coarsely quantized samples. Furthermore, Figure \ref{fig:Bunny} (f)--(h) suggests that, under our single-shot noise shaping approach, the quantization error accumulates at ``high-curvature'' regions of the graph. It would be desirable to understand this effect in future work.


\textbf{Experiment 2: SSNS vs. SSS-R and Worst-Case Error Guarantees.} In the second experiment, we compare the performance of SSNS to our theoretical worst-case guarantees in Theorem~\ref{thm:ErrorBound}, to naive MSQ directly applied to the data, and to the \emph{Step-by-Step-Serving with Replacement (SSS-R)} method of \citet{krahmer2023quantization,krahmer2026quantization} 
under a fixed quantization budget of \(\log\log N\) bits required by the SSS-R method, cf.\ Section \ref{sec:Comparison}.\footnote{We did not include LGFF-based quantization \citep{reingruber2025efficient} in Experiment 2. Their method is hard to compare directly to ours since it produces overcomplete representations $\bq \in \calA^M$, for $M \ge N$, and operates in the transform domain with bit allocation, where informative coefficients receive more bits, while less relevant coefficients receive fewer bits.} 
Figure~\ref{fig:theory-SSNS-SSS-R} reports the Relative Error (on logarithmic scale) as a function of \(r\) for several representative graph topologies, including grid, ring, and Swiss-roll graphs. For each bandwidth value, the reported error is obtained by averaging over multiple random realizations (95\% confidence intervals are provided).

We compare our proposed \emph{SSNS} method with the baseline SSS-R scheme. Across all graph models, SSNS consistently attains lower or comparable average relative error for small to moderate bandwidths. As the bandwidth increases, the error of both methods grows, reflecting the increased incoherence of the graph Fourier basis associated with larger neighborhoods. For larger bandwidths, SSS-R shows more stability and achieves lower error. Let us mention that SSS-R only requires $\calO(rN \log(N))$ operations, which makes it more efficient for higher bandwidths. However, SSS-R is not available for 1-bit quantization and always requires $N \log(N)$-bits on average. 

The dashed curves correspond to the theoretical upper bound in \eqref{eq:theor-bound-SSNS}. In particular, we see that our theoretical bound is overly pessimistic due to its worst-case nature, but captures the overall growth trend with respect to \(r\) and provides theoretical justification for the observed performance degradation at larger bandwidths. Importantly, SSNS remains well below the theoretical bound across all considered graphs, confirming its robustness to quantization effects.



\textbf{Experiments --- Synopsis.} All conducted experiments confirm the effectiveness of our method for quantizing graph data with low bandwidth. The reduction in relative error with increasing bit budget is consistent across all tested graphs, demonstrating the method's robustness to both signal structure and graph topology. Furthermore, our results provide insights into the performance scaling of SSNS. Namely, graphs with more uniform connectivity (e.g., 2D grid) tend to exhibit smoother error growth compared to irregular graphs (e.g., sensor or bunny), reflecting the influence of graph~topology~on~the~distribution~of~Fourier~energy.  Additional experiments are provided in Appendix \ref{app:Experiments}, including an application of our method to 3D digital halftoning, and comparisons of the runtime and the empirical and theoretical bid-depth scaling of the error.


\section{Discussion}
\label{sec:Discussion}

We have shown in this paper that Single Shot Noise Shaping (SSNS) via Algorithm \ref{alg:Preprocessing} provides a simple and effective approach to quantizing graph signals with low computational complexity and predictable scaling behavior. While it is computationally more expensive than existing noise shaping methods such as Step-by-Step-Serving with Replacement (SSS-R), it comes with improved theoretical guarantees, allows for a flexible choice of bit-levels including extreme one-bit quantization, and shows better performance for very small bandwidths.

\textbf{Limitations.} In our study, we focused on the brick-wall filter as a prototype of low-pass filters. Various other low-pass filters exist, such as  \cite{defferrard2016convolutional, chang2021not, nt2019revisiting}. Given that the brick-wall filter is the standard example of an idealized low-pass filter, and that to the best of our knowledge this is the first approach to graph quantization that allows one-bit quantization \emph{and} comes with rigorous error guarantees in the $\mathcal Q \mathcal E_L$-measure, we believe that such a restriction is more than legitimate. We furthermore provided no robustness analysis of our method on approximately $r$-bandlimited data, and only use the graph incoherence to characterize the influence of the graph topology. As our experiment in Figure \ref{fig:Bunny} (f)--(h) suggests, a more refined analysis might be needed to fully explain the observed quantization error on complex graphs for strongly bandlimited signals, cf.\ Experiment 1 in Section \ref{sec:NumericalExperiments}.

\bibliography{paper}
\bibliographystyle{plainnat}


\newpage

\appendix

\section{Supplementary material}

\subsection{Efficient implementation of Algorithm \ref{alg:Preprocessing}}
\label{app:Algorithm}

We provide pseudo-code of the accelerated preprocessing algorithm of \citet{meyer2024thesis} in Algorithm \ref{alg:PreprocessingII}. It improves over Algorithm \ref{alg:Preprocessing} by extracting kernel vectors $\bb$ in a more efficient way. To this end, it splits the columns of $\bX$ into blocks of size $r$ and iteratively extracts collections of $r$ kernel vectors per block. By slightly modifying the single kernel vectors in $r$ inner iterations, they can all be used before a new collection has to be generated. Generating $r$ kernel vectors is of the same complexity as generating a single kernel vector. Asymptotically, each of the $\lceil N/r \rceil$ outer iterations of Algorithm \ref{alg:PreprocessingII} thus requires as many operations as each of the $N$ iterations of Algorithm \ref{alg:Preprocessing}, leading to an overall complexity improvement of a factor $r$.

Note that in our presentation of Algorithm \ref{alg:PreprocessingII} we assume that $N$ is divisible by $r$ for convenience. If this is not the case, the last iteration of the outer loop has to be slightly modified since less than $r$ linearly independent vectors $\bb_i^{(1)}$ remain in $\ker_{I^c}(\bX)$ in the last step.

\begin{algorithm}[t]
   \caption{Vector preprocessing \cite{meyer2024thesis}}
   \label{alg:PreprocessingII}
    \begin{algorithmic}
       \STATE {\bfseries Input:} $\bX \in \R^{r\times N}$ ($r < N$), $\bz_0 \in \R^N$, and $c \ge \|\bz_0\|_\infty$
       \STATE 
       \STATE Initialize $k=0$ and  $$J_0 = \{ i \in [N] \colon \text{the $i$-th column of $\bX$ is zero} \}.$$
       \STATE Define $\bb \in \R^N$ via $\bb_{J_0^c} = \boldsymbol{0}$ and $b_i = c - (z_0)_i$, for $i\in J_0$, such that $\bb \in \ker(\bX)$ and $|(z_0)_i + b_i| = c$ for $i \in J_0$.
       Replace $\bz_0$ with $\bz_0 + \bb$.

       \REPEAT
       \STATE Let $I = \{i_1,\dots,i_{2r}\} \subset J_k^c$ contain the $2r$ smallest indices in $J_k^c$.
       \STATE Compute a set of $r$ linearly independent vectors $\calB^{(1)} = \{\bb_1^{(1)},\dots,\bb_{r}^{(1)}\}$ in $\ker_{I^c}(\bX)$.
       \STATE $\bz_{k,1} \leftarrow \bz_k$
       \STATE $J_{k,1} \leftarrow J_k$
       \FOR{$\ell=1,\dots,r$}
       \STATE $\bb \leftarrow \bb_1^{(\ell)}$
       \STATE Compute $\alpha \in \R$ with $\| \bz_{k,\ell} + \alpha \bb \|_\infty = c$
       \STATE $\bz_{k,\ell+1} \leftarrow \bz_{k,\ell} + \alpha \bb \in \R^{N}$
       \STATE $i_* \leftarrow \argmin \{ i \in I \colon |(\bz_{k,\ell+1})_i| = c \}$
       \STATE $J_{k,\ell+1} \leftarrow J_{k,\ell} \cup \{ i_\star \}$
       \STATE Define $\calB^{(\ell+1)} = \{ \bb_1^{(\ell+1)},\dots,\bb_{r-\ell}^{(\ell+1)}\}$ with
       \begin{align*}
            \bb_j^{(\ell+1)} = \bb - \frac{b_{i_\star}}{(b_{j+1}^{(\ell)})_{i_\star}} \bb_{j+1}^{(\ell)},
       \end{align*}
       which consists of $(r-\ell)$ linearly independent vectors with support in $I \setminus J_{k,\ell+1}$.
       \ENDFOR
       \STATE $J_{k+1} \leftarrow J_{k,r+1}$
       \STATE $\bz_{k+1} \leftarrow \bz_{k,r+1}$ 
       
       \STATE $k \leftarrow k+1$
       \UNTIL{$\| |\bz_k| - c\boldsymbol{1} \|_0 \le r $}
       \STATE $k_{\text{final}} = k$
	    \STATE 
		\STATE \textbf{Output:} $\bz_{k_{\text{final}}}$ for which $\bX \bz_{k_{\text{final}}} = \bX\bz_0$, $\| \bz_{k_{\text{final}}} \|_\infty = c$, and $\| |\bz_{k_{\text{final}}}| - c \boldsymbol{1} \|_0 \le r$
    \end{algorithmic}
\end{algorithm}

\subsection{Additional numerical experiments}
\label{app:Experiments}

\textbf{Experiment 3: Empirical and Theoretical Bit-Depth Scaling.} 
In this experiment, we investigate the effect of quantization bit-depth on the reconstruction accuracy of bandlimited graph signals under the SSNS method, and compare its empirical performance with our theoretical error bounds. We consider the same graph topologies as in Experiment 1. 
For each graph, we generate bandlimited graph signals supported on the first $r=200$ graph Fourier modes, and generate $50$ independent signal realizations.
We measure the relative squared reconstruction error \eqref{eq:num:rel:error} and report the average error over all realizations for each bit-depth.

The theoretical analysis predicts that the reconstruction error scales as
$
\mathcal{O}\!\left( 2^{-B}  \right)
$ if $r$ and $N$ are fixed. In Figure \ref{fig:BitDepth}, we display this bound (up to a constant factor) together with the average relative error. We see that for small bit rates the actual average error is considerably smaller than expected. Note that this does not contradict our error bounds which are based on worst-case estimates.

\begin{figure}[t]
    \centering   \includegraphics[width=0.6\linewidth]{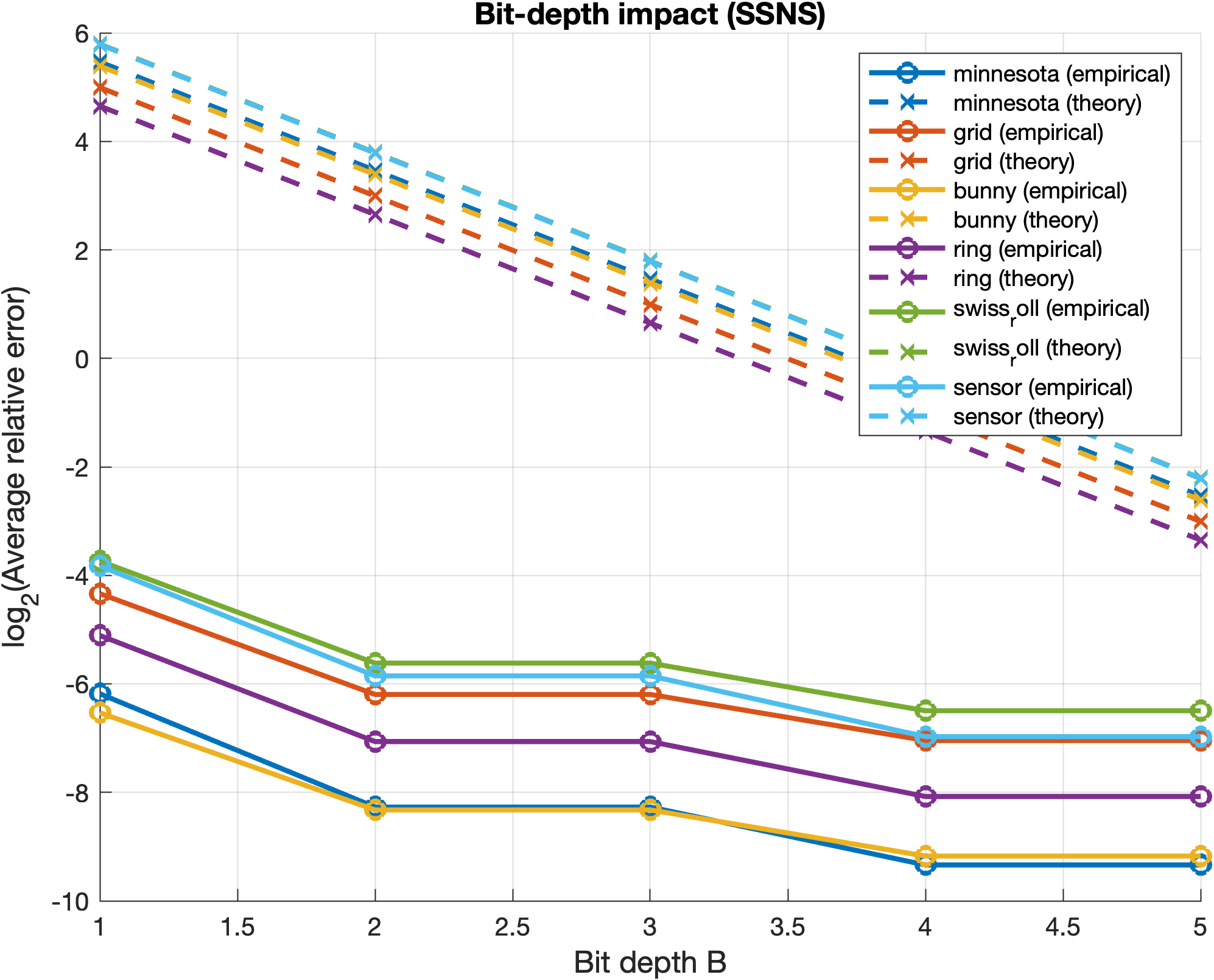}
    \caption{Average relative error versus bit-depth for SSNS across different graph topologies, compared with the theoretical bound proportional to $2^{-B}$. Results are averaged over 50 random signals with bandwidth $r=200$.}

    \label{fig:BitDepth}
\end{figure}

\textbf{Experiment 4: Wall-clock Runtime Comparison.} We evaluate the wall-clock runtime of the two quantization strategies on representative graph datasets, namely \texttt{grid}, \texttt{bunny}, and \texttt{swiss\_roll}. 
For each graph, we generate bandlimited signals with a fixed bandwidth of $r = 50$ and perform quantization using the two approaches \textit{SSNS} and  \textit{SSS-R}. The experiments are presented in Figure~\ref{Fig7:wall-clock}. All experiments are repeated over $5$ independent runs, and we report the average runtime.

We explicitly separate the computational cost of the graph Fourier basis (eigendecomposition) from the quantization step. 
Figure \ref{Fig7:wall-clock} shows that the quantization costs of SSS-R (excluding the eigendecomposition) are notably lower than the ones of SSNS. However, for large graphs, the eigenbasis computation dominates the total runtime of both algorithms.

\begin{figure*}[t]
    \centering
    \includegraphics[width=0.32\linewidth]{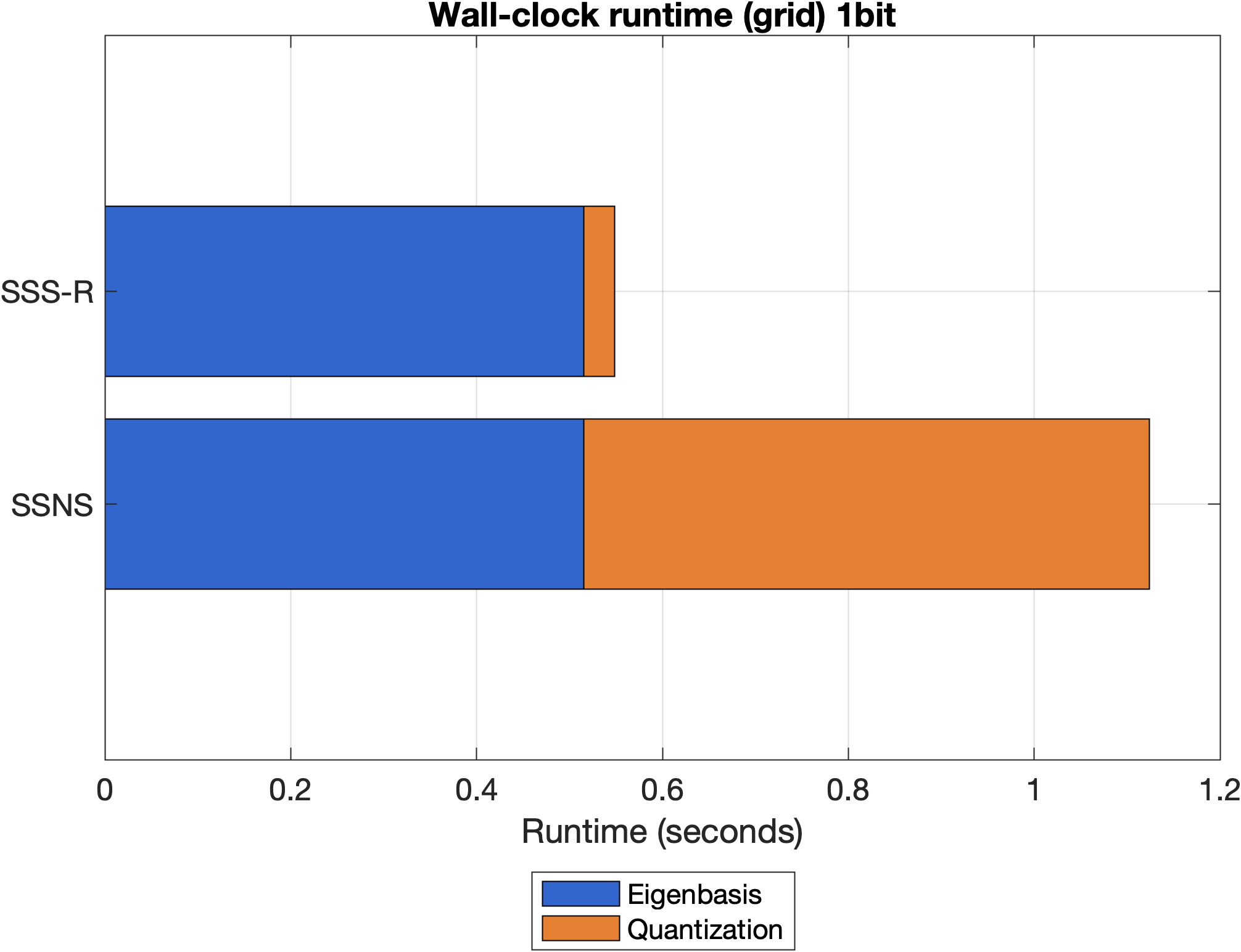}
    \includegraphics[width=0.32\linewidth]{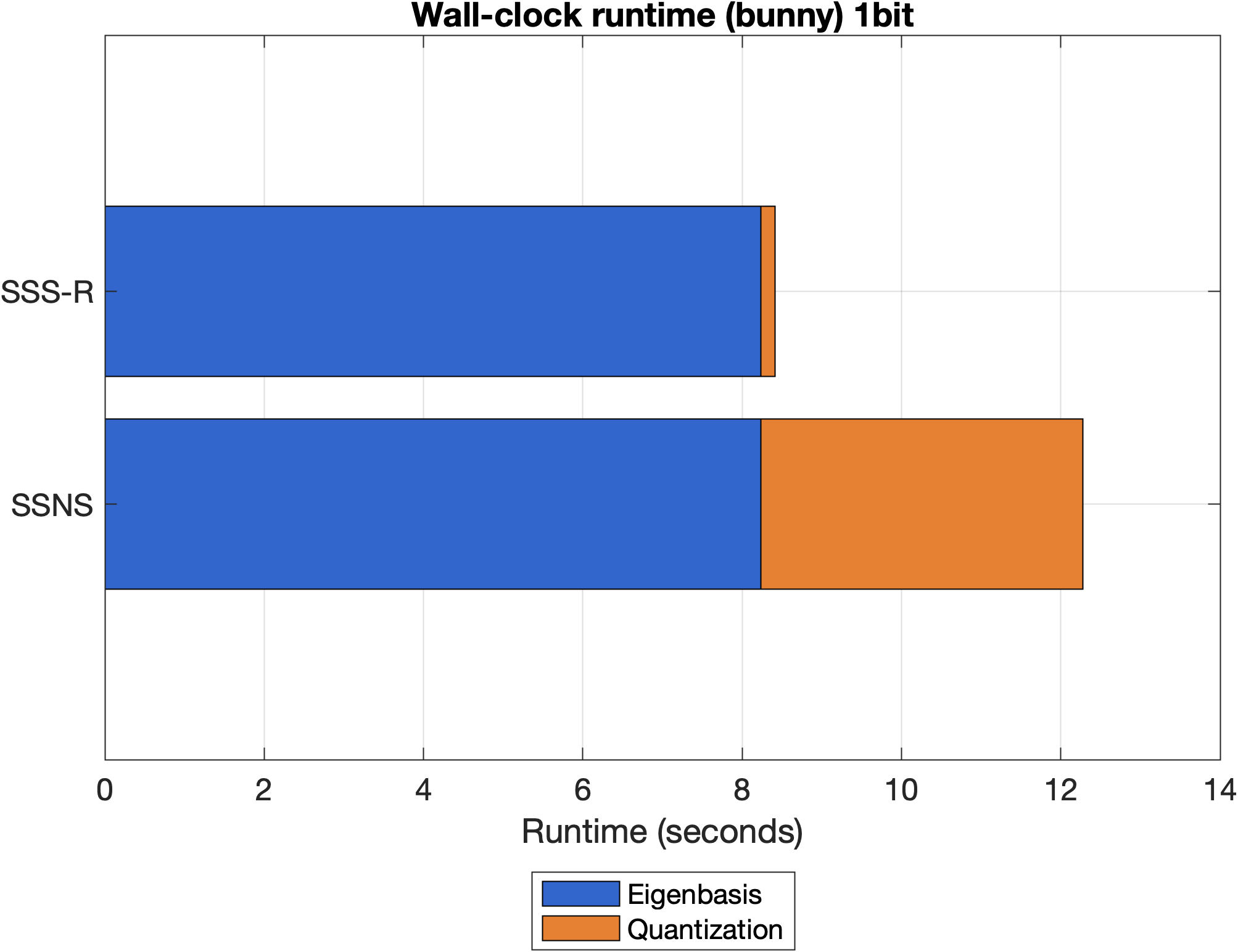}
    \includegraphics[width=0.32\linewidth]{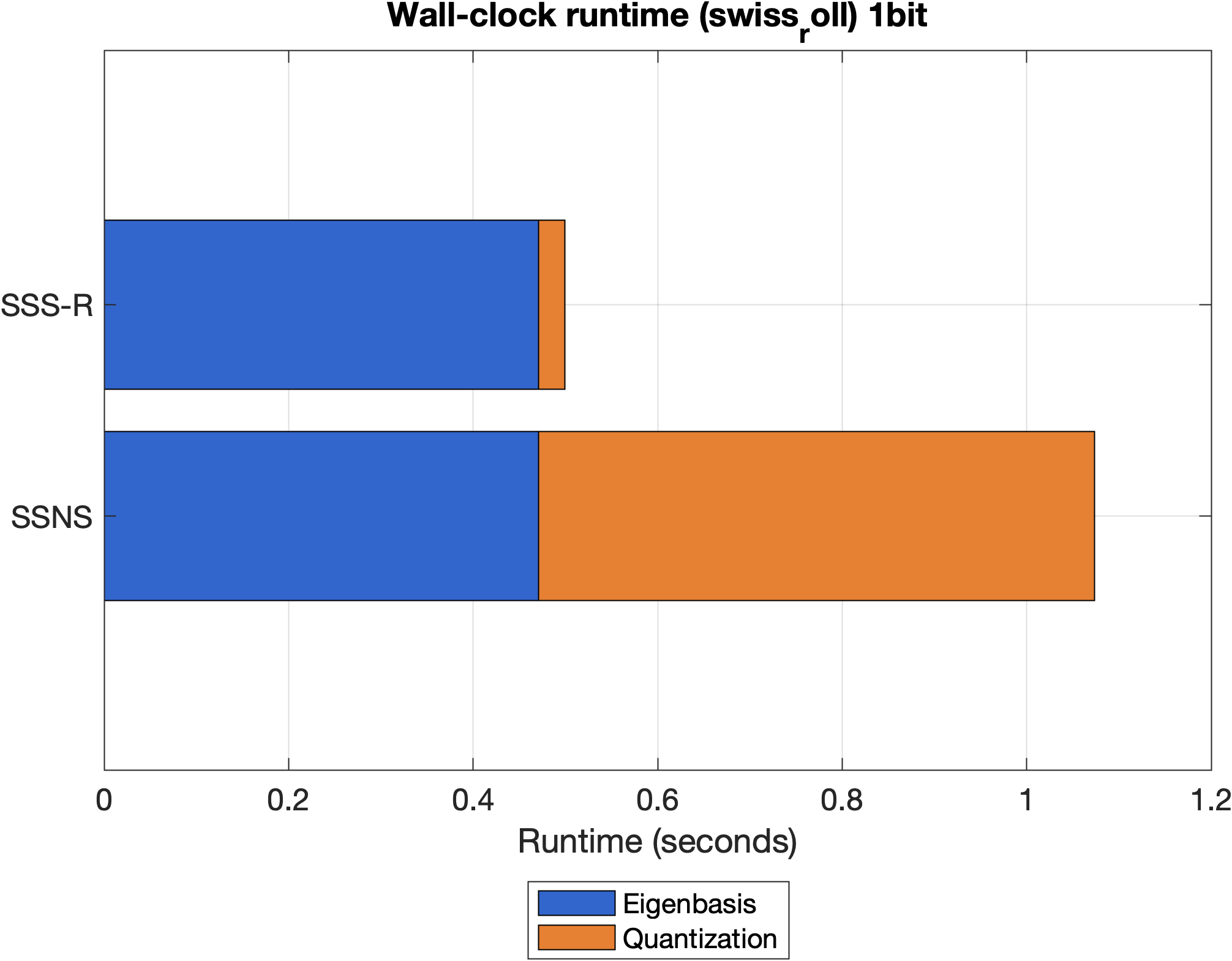}
    \caption{Wall-clock runtime comparison for different graph types (1-bit quantization). 
    Each bar shows the total runtime, decomposed into eigenbasis computation (blue) and quantization (orange).}
    \label{Fig7:wall-clock}
\end{figure*}

\textbf{Experiment 5: Graph-Signal Halftoning on 3D Meshes.}
Finally, we evaluate the proposed \emph{SSNS} algorithm for halftoning of graph signals on 3D shapes and compare it with Sigma-Delta-Weight (SDW) in \cite{krahmer2023quantization} and \emph{memoryless scalar quantization (MSQ)}; see also related work on
3D halftoning in \cite{gooran20203d, gooran2022three, mao20173d, lou1998fundamentals, abedini2023multi, abedini20232d}.

We load the Stanford Bunny mesh using PyVista and construct an undirected graph over its vertices via a
$k$-nearest-neighbors rule, similar to the procedure used in GSPBox. The resulting graph Laplacian defines the
graph structure. As a test signal, we consider a non-bandlimited graph signal $\f {\in} [0,1]^N$ given by the
$z$-coordinate of each vertex, where $N$ is the number of vertices. We then compute 1~-~bit halftoned signals
$\bq{\in}\{-1,1\}^N$ using MSQ, SDW, and the proposed SSNS method, and directly visualize $q$ on the mesh. For  both SDW and SSNS, we conduct experiments with two heuristic bandwidth rates, namely~$r=20$~and~$r=50$,~see~Fig.~\ref{fig:BunnyHalftoning}

This experimental design follows the classical motivation of digital halftoning: although the displayed signal is binary, perceptual integration by the human visual system acts as an implicit low-pass filter, so visual appearance is primarily determined by low-frequency content
\cite{SteinbergFloyd,lyu2023sigma,
krahmer2022enhanced}.
While the signal is not bandlimited and therefore lies outside the scope of our theoretical analysis, the three
methods produce visually distinct halftoning patterns, illustrating different ways of shaping quantization
artifacts on graphs. We emphasize that with higher bandwidth $r$, among the compared approaches, only SSNS is supported by provable error bounds, highlighting its role as a theoretically justified alternative for graph-based halftoning.

\begin{figure*}[t] 
\centering
\begin{minipage}{0.32\textwidth}
    \centering
    \includegraphics[width=1.1\linewidth]{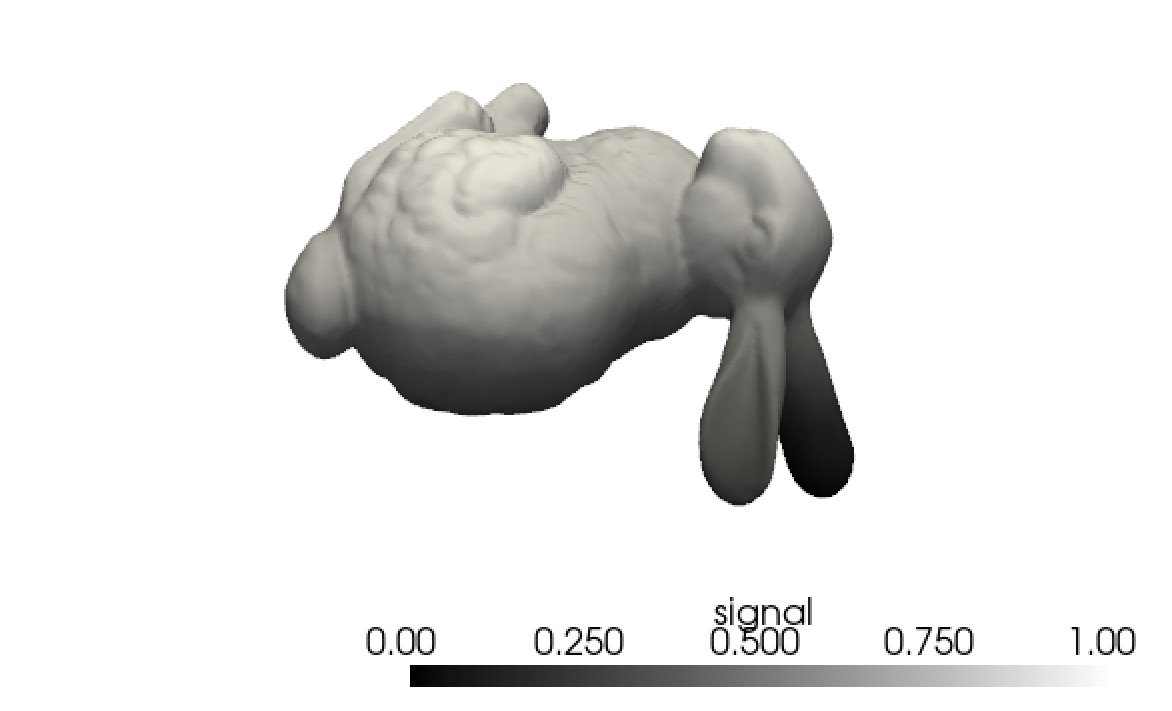}
    \\(a) Original bunny shape
\end{minipage}\hspace{5pt}%
\hspace{5pt} 
\begin{minipage}{0.31\textwidth}
    \centering    \includegraphics[width=1.1\linewidth]{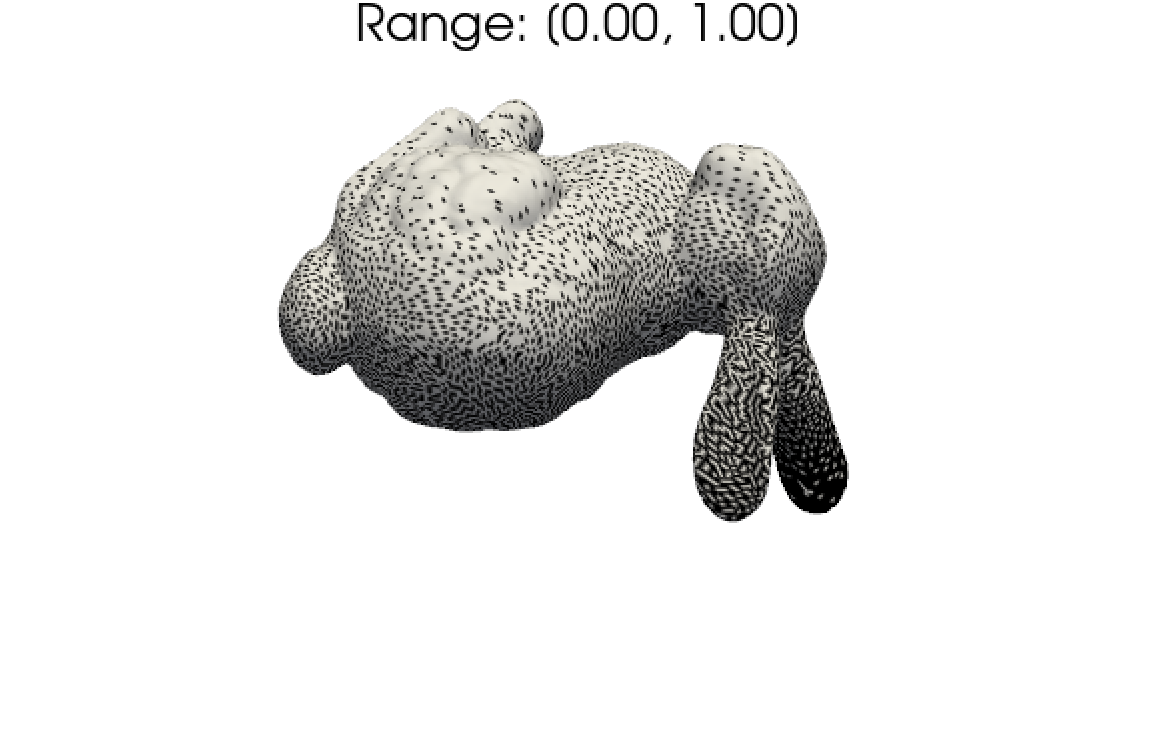}
    \\ (c) Bunny shape, 1 bit SDW, $r=20$ 
\end{minipage}\hspace{5pt}%
\begin{minipage}{0.32\textwidth}
    \centering  \includegraphics[width=1.1\linewidth]{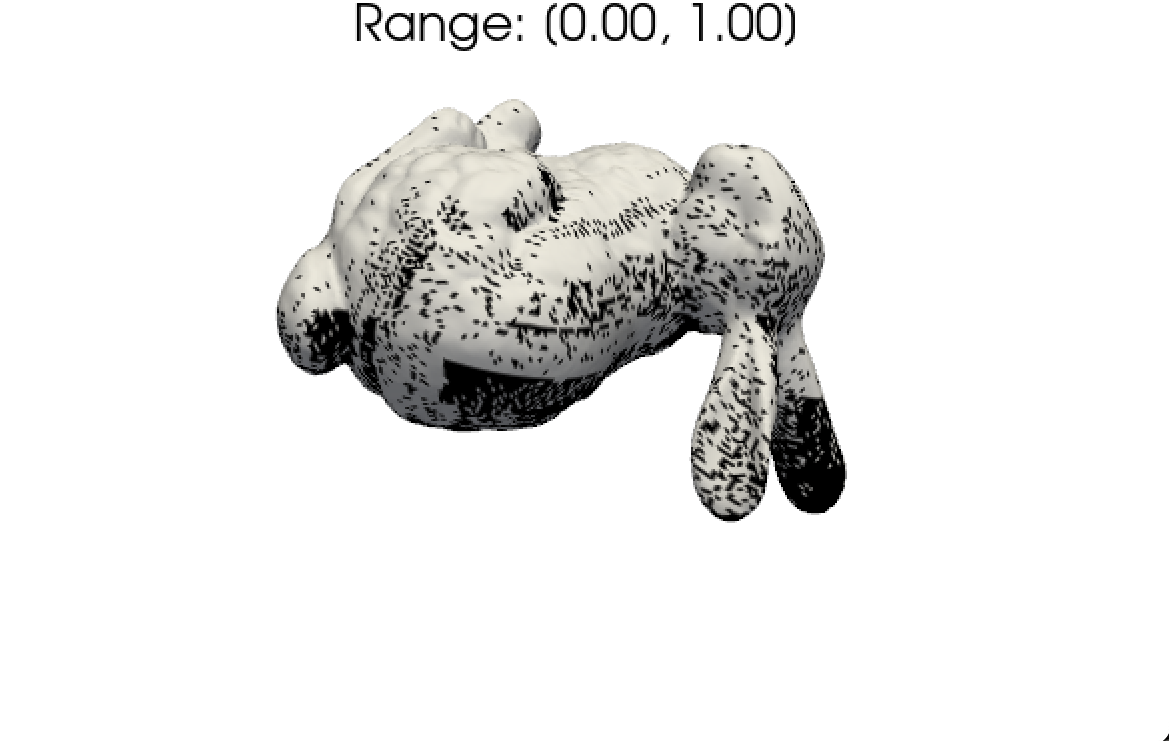}
    \\(d) Bunny shape, 1 bit SSNS,  $r=20$
\end{minipage}
\hspace{5pt}%
\hfill

\vspace{2mm}
\begin{minipage}{0.32\textwidth}
    \centering
    \includegraphics[width=1.1\linewidth]{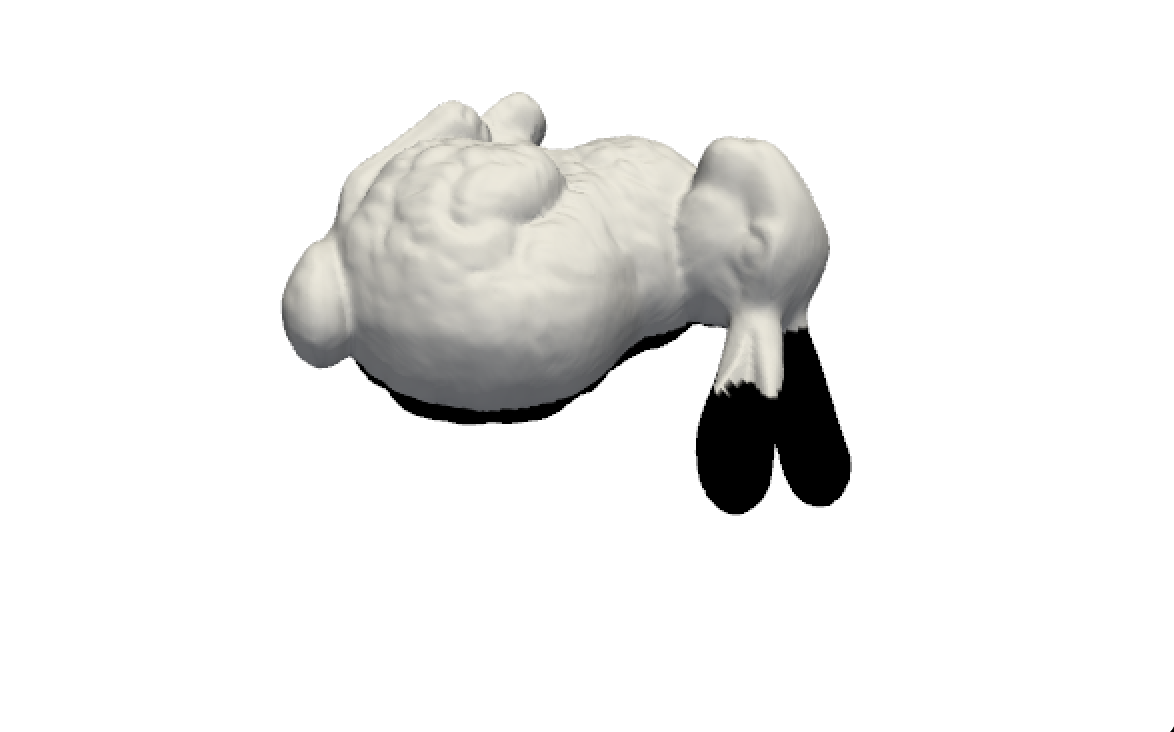}
    \\(b) Bunny shape, 1 bit MSQ  
\end{minipage}
\hspace{5pt}%
\begin{minipage}{0.32\textwidth}
    \centering    \includegraphics[width=1.1\linewidth]{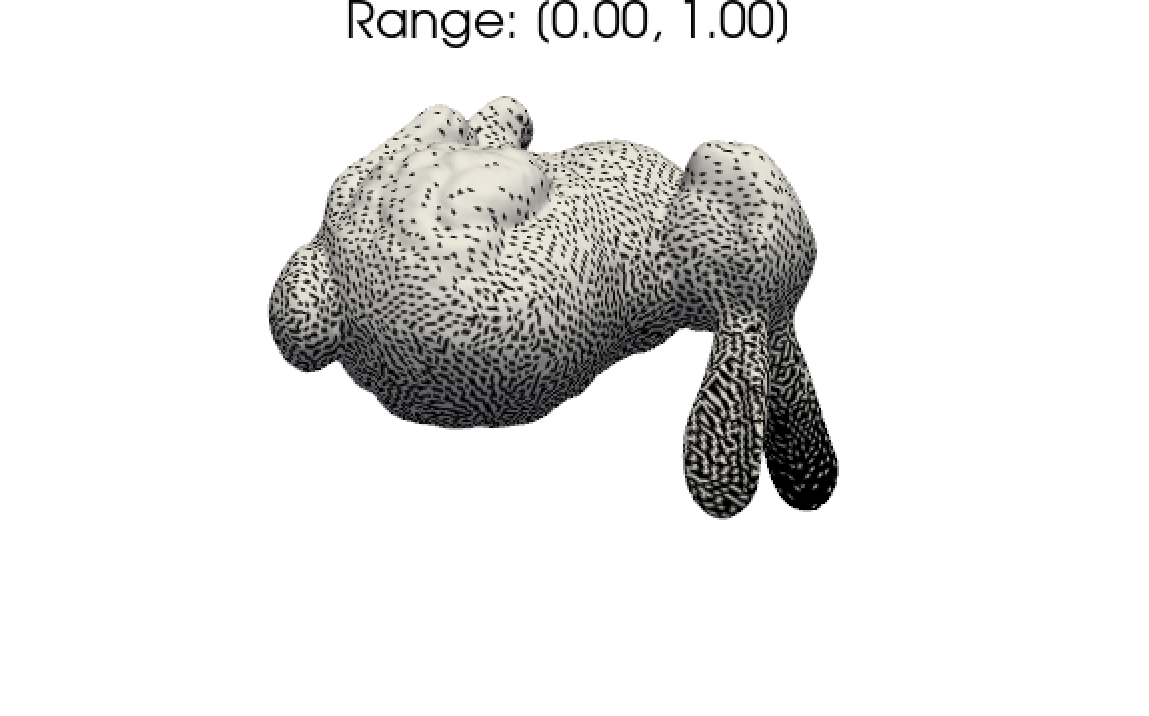}
    \\ (e) Bunny shape, 1 bit SDW, $r=50$
\end{minipage}\hspace{5pt}%
\begin{minipage}{0.32\textwidth}
    \centering  \includegraphics[width=1.1\linewidth]{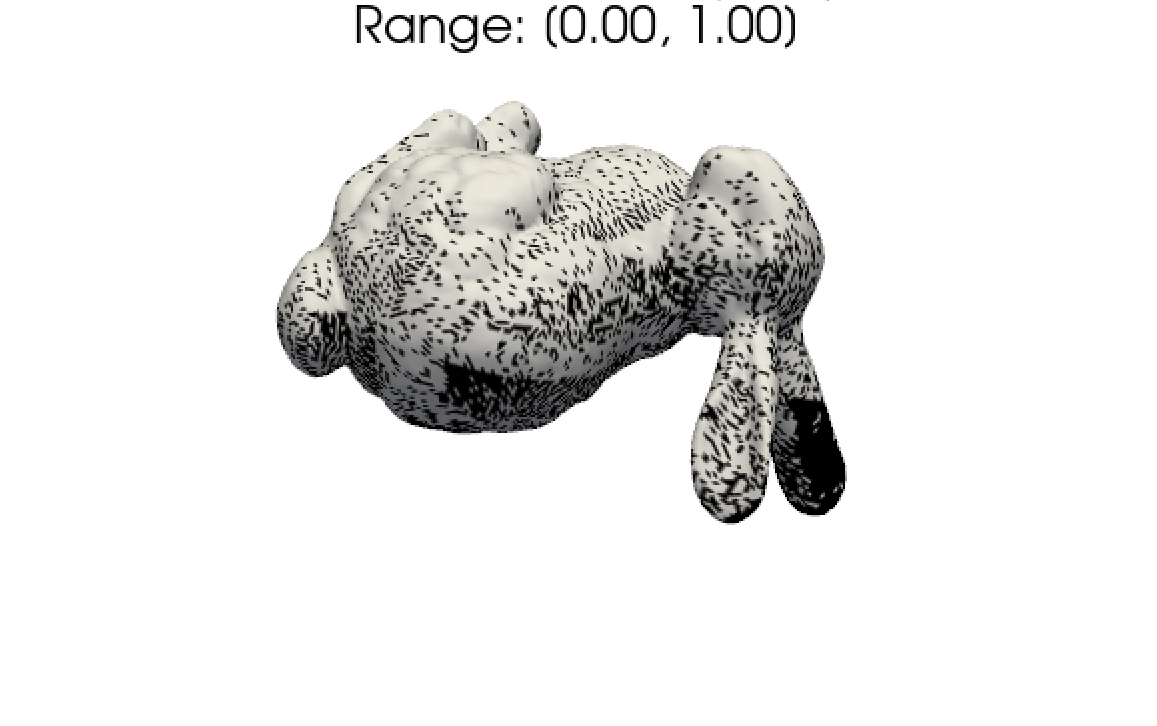}
    \\(f) Bunny shape, 1 bit SSNS $r=50$
\end{minipage}

\caption{Halftoning of the Stanford Bunny mesh graph signal using 1-bit quantization.
(a) Original signal.
(b) Memoryless scalar quantization (MSQ).
(c),(e) Sigma-Delta-Weight  (SDW) \cite{krahmer2023quantization} with different bandwidth $r$.
(d),(f) Proposed Single-Shot Noise Shaping (SSNS) with different different bandwidth  $r$.
All halftoned outputs are binary signals $\bq \in \{-1,1\}^N$ visualized on the mesh.}
\label{fig:BunnyHalftoning}
\end{figure*}


\newpage

\end{document}